\def\BibTeX{{\rm B\kern-.05em{\sc i\kern-.025em b}\kern-.08em
    T\kern-.1667em\lower.7ex\hbox{E}\kern-.125emX}}
\long\def\com#1{}
\newcommand{\sys}{\textsf{RACS}\xspace}
\newcommand{\mempool}{\textsf{SADL}\xspace}
\theoremstyle{plain}
\newtheorem{theorem}{Theorem}
\begin{document}

\title{RACS-SADL: Robust and Understandable Randomized Consensus in the Cloud}

\author{\IEEEauthorblockN{Pasindu Tennage}
\IEEEauthorblockA{
EPFL and ISTA\\
Lausanne, Switzerland \\
pasindu.tennage@epfl.ch}
\and
\IEEEauthorblockN{Antoine Desjardins}
\IEEEauthorblockA{
ISTA\\
Klosterneuburg, Austria \\
antoinedesjard@gmail.com}
\and
\IEEEauthorblockN{Lefteris Kokoris-Kogias}
\IEEEauthorblockA{
Mysten Labs and ISTA\\
Athens, Greece \\
lefteris@mystenlabs.com}
}

\maketitle

\begin{abstract}
Widely deployed consensus protocols in the cloud 
are often leader-based and optimized for low 
latency under synchronous network conditions. 
However, cloud networks can experience 
disruptions such as network partitions, high-loss 
links, and configuration errors. These 
disruptions interfere with the 
operation of leader-based protocols, as their 
view change mechanisms interrupt the normal case replication and cause the system to stall.

We propose \sys, a novel randomized consensus protocol that ensures robustness against adversarial network conditions.
\sys achieves optimal one-round trip latency under synchronous network conditions while remaining resilient to adversarial network conditions.
\sys follows a simple design inspired by Raft, the most widely used consensus protocol in the cloud, and therefore enables seamless integration with the existing cloud software stack.

Experiments with a prototype running on Amazon EC2 show that \sys achieves 28k cmd/sec throughput, ninefold higher than Raft under adversarial cloud network conditions.
% (28-2.8)/2.8
Under synchronous network conditions, \sys matches the performance of Multi-Paxos and Raft, achieving a throughput of 200k cmd/sec with a median latency of 300ms, confirming that \sys introduces no unnecessary overhead.
Finally, \mempool-\sys, a throughput-optimized version of \sys, achieves a throughput of 500k cmd/sec, delivering 150\% higher throughput than Raft.
% (500-200)/200
\end{abstract}

\begin{IEEEkeywords}
consensus, adversarial, networks, scalability.
\end{IEEEkeywords}

\section{Introduction}\label{sec:intro}

Consensus~\cite{cachin2011introduction}
enables a set of replicas to maintain strongly consistent
state while remaining resilient to failures, 
and is widely used in
distributed cloud applications~\cite{burrows2006chubby},\cite{hunt2010zookeeper},\cite{maccormick2004boxwood},\cite{corbett2013spanner},\cite{baker2011megastore},\cite{xie2014salt},\cite{quintero2011implementing},\cite{mashtizadeh2013replication},\cite{grimshaw2013gffs},\cite{bronson2013tao},\cite{lloyd2011don}.

Most deployed consensus protocols in the cloud follow a leader-based design~\cite{hunt2010zookeeper,maccormick2004boxwood,lamport2001paxos,ongaro2014search}.
The leader replica replicates client commands among the non-leader replicas and responds back to clients.
Under normal case, where the network is synchronous, leader-based protocols provide moderate throughput and low latency.

Modern cloud networks~\cite{nastic2020sloc,ding2019characterizing} use high-speed connections and are often synchronous.
However, cloud networks can occasionally encounter adverse conditions such as high latency, packet loss, and partial connectivity~\cite{cloudflare-nov20}. 
These issues often stem from failing network hardware, congested links, inaccuracies in failure detection software, and flaws in monitoring systems~\cite{moura2016review,de2022noise}.
Hereafter, we refer to these occasional adverse cloud network conditions as "adversarial network conditions".

In practice, leader-based protocols and their multi-leader extensions~\cite{moraru2013there}\cite{barcelona2008mencius}\cite{charapko2021pigpaxos}\cite{lamport2005generalized} are designed to detect adversarial network conditions using timeouts~\cite{ongaro2014search}.
If the leader replica does not respond within a pre-configured timeout, the remaining replicas elect a new leader replica.
Electing a new leader replica often involves a separate "view change" sub routine~\cite{ongaro2014search}, where each replica votes for a potential next leader replica.

View change-based leader election introduces performance and robustness challenges~\cite{tennage2022baxos,pasinduQuePaxa2023}.
During view change, no new client commands are committed, resulting in a prolonged commit delay for commands submitted during this period.
Client commands proposed during the view change accumulate into a substantial backlog, causing significant overhead that persists even after the completion of the view change-based leader election.
Since each replica buffers client commands during the view change, a significant backlog of commands accumulates by the time the leader election is completed.
The large backlog of buffered commands places significant strain on the next elected leader, leading to further performance degradation.

Asynchronous consensus protocols~\cite{ ben1983another}\cite{BoBandle} ensure high robustness and performance under adversarial network conditions by leveraging randomization instead of relying on view change.
However, to the best of our knowledge, asynchronous consensus protocols have never been deployed in cloud applications due to three main drawbacks:  
(1) they require significantly higher bandwidth compared to leader-based protocols~\cite{ben1983another};  
(2) their design is often incompatible with the existing cloud software stack, hindering seamless integration and rapid adoption; and  
(3) their complexity makes them difficult to understand and implement, which limits industry adoption.
Hence, cloud providers continue to deploy leader-based consensus protocols despite their limited robustness against adversarial network conditions~\cite{burrows2006chubby}.

\textbf{Problem Statement:}  
Can we design a consensus protocol that (1) remains robust under adversarial network conditions, (2) achieves performance comparable to leader-based protocols under synchronous network conditions, and (3) maintains a simple design that allows easy integration with the existing cloud software stack?

This paper proposes \sys, a robust and easily understandable randomized consensus protocol that addresses the above problem. 
\sys achieves three key goals: 
\begin{compactitem}
\item \textbf{G1}: Provides robustness against adversarial networks.
\item \textbf{G2}: Matches the performance of high speed leader-based protocols~\cite{lamport2001paxos,ongaro2014search} under synchronous network conditions. 
\item \textbf{G3}: Employs an easy-to-understand design that seamlessly integrates with existing cloud software stack, enabling rapid adoption in the cloud.
\end{compactitem}

Achieving robustness against adversarial network conditions has already been explored~\cite{ben1983another, danezis22narwhal, pasinduQuePaxa2023, BoBandle}, and we explicitly state that this is "not" our primary contribution.
Instead, \sys, our novel protocol, "simultaneously" achieves the three goals G1, G2, and G3—an accomplishment that no existing asynchronous protocol has achieved before.
Designing \sys while achieving G1, G2 and G3 simultaneously is not trivial.
First, designing a protocol that ensures robustness against adversarial network conditions without incurring additional overhead under synchronous conditions is a non-trivial algorithmic challenge.
Second, designing a robust protocol while ensuring ease of understanding and compatibility with existing cloud software stacks is a challenge that no previous asynchronous consensus protocol has successfully addressed.

We implemented and evaluated \sys 
in Go \cite{go-lang} and compared it against the existing implementations of
Multi-Paxos\cite{lamport2001paxos}, Raft\cite{ongaro2014search},
and EPaxos \cite{moraru2013there}.
We evaluated \sys
on Amazon EC2 in a multi-region WAN setting.
First, we show that \sys provides 28k cmd/sec  
of throughput, under adversarial 
network conditions,
and outperforms Multi-Paxos and Raft which only provide 2.8k cmd/sec
in the same setting.
Second, we show that \sys delivers 200k cmd/sec in throughput
under 300ms median latency, comparable to Multi-Paxos's 
200k cmd/sec throughput, under synchronous normal case conditions.
Finally, we introduce novel optimizations for \sys and demonstrate that the optimized version; \mempool-\sys achieves an impressive throughput of 500k cmd/sec.

\subsection{Contributions}

\begin{compactitem}
    \item We propose \sys, the first practical randomized consensus protocol that simultaneously achieves (1) robustness against adversarial network conditions, (2) high performance under typical synchronous network conditions, and (3) seamless integration with existing cloud software stack.
    \item We provide the formal proofs of \sys in~\Cref{sec:formal-proof}.
    \item We propose \mempool, an optimization for \sys that enables high throughput and high robustness.
    \item We implement a prototype of \sys and conduct an experimental analysis on Amazon EC2.
    \item We show that \sys achieves nine times greater robustness than Raft under adversarial network conditions, while \mempool-\sys delivers 150\% higher throughput than Raft. 
\end{compactitem}
\section{Background and Related Work}\label{sec:background}

\subsection{Threat Model and Assumptions}\label{subsec:threat-model}

% n = 2f+1 , omission faults 

We consider a system with $n$ replicas.
Up to $f$ (where $n$$=$2$f$+1) number of replicas can crash,
but replicas do not equivocate nor commit
omission faults \cite{cachin2011introduction}.

% network setup

We assume first-in-first-out (FIFO)
perfect point-to-point links \cite{cachin2011introduction} between
each pair of replicas. 
In practice, TCP\cite{rfc793} provides FIFO perfect point-to-point links.
We assume a content-oblivious~\cite{aspnes2003randomized}
network adversary; the adversary 
may manipulate network delays,
but cannot observe
the message content nor 
the internal replica state.
In practice, 
TLS~\cite{rescorla2018transport}
encrypted channels 
between each pair of replica
satisfy this assumption.

% flp impossibility and randomness
FLP~\cite{fischer1985impossibility} impossibility result claims that no deterministic protocol can guarantee both safety
and liveness in a fully asynchronous system with even one faulty process.
Practical protocols address the FLP~\cite{fischer1985impossibility} impossibility using two approaches: (1) assuming partial synchrony, 
and (2) using randomization. 
Protocols like Raft~\cite{ongaro2014search} and Multi-Paxos~\cite{lamport2001paxos} assume partial synchrony and ensure safety at all times, however, lose liveness under asynchrony.
Randomized protocols, in contrast, employ randomness instead of determinism, preserving safety always and ensuring liveness under asynchrony with probability 1.

In a typical cloud network, there are periods
in which the network behaves synchronously,
followed by phases where the network shows adversarial behavior.
\sys makes use of this network behavior
and operates in two modes:
(1) synchronous mode and (2) fallback mode.
During the synchronous periods, \sys employs a leader-based design
to reach consensus using one round trip
network delay, and during the fallback mode, \sys employs randomization.

\subsection{Consensus}\label{subsec:consensus-intro}

% what is consensus
Consensus enables a set of replicas to reach
an agreement on a single history of values.
A correct consensus algorithm satisfies four
properties \cite{cachin2011introduction}:
(1) \textit{validity}: the agreed upon value should be
previously proposed by a replica,
(2) \textit{termination}: every correct process
eventually decides some value,
(3) \textit{integrity}: no process decides twice,
and (4) \textit{agreement}: no two correct processes decide differently.

\begin{figure}[t]
    \centering
    \includegraphics[scale=0.25]{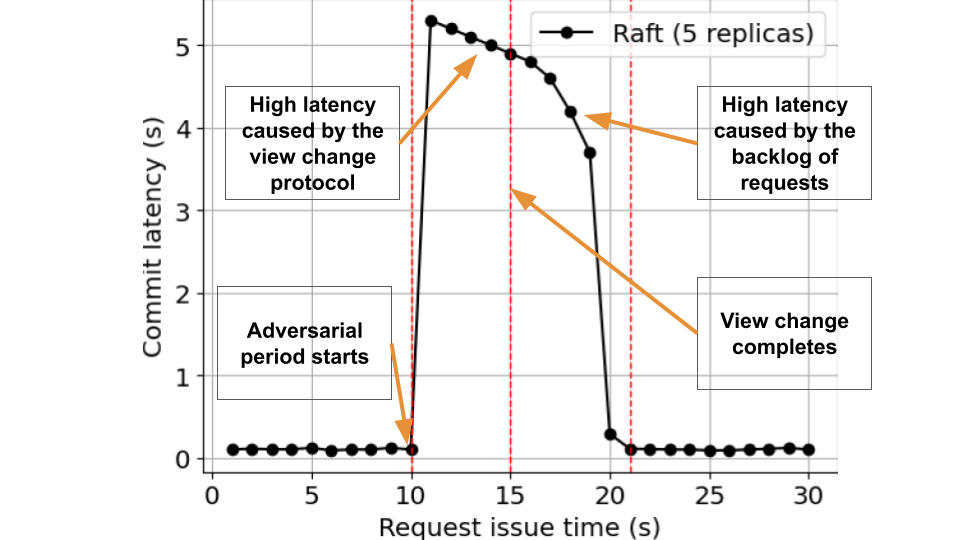}
    \caption{High latency overhead due to view change in Raft.}
    \label{fig:hangover}
    \vspace{-5mm}    
\end{figure}

\textbf{Robustness problem in leader-based protocols}: 
Multi-Paxos \cite{lamport2001paxos}, Raft \cite{ongaro2014search}, and View Stamp Replication \cite{oki1988viewstamped} employ a single leader replica to totally order commands.
Under normal network conditions, the leader replica replicates client commands 
in one network round trip.
When the network conditions become adversarial, leader-based protocols
run a separate "view-change" sub-routine that elects a new leader replica.
View change introduces robustness challenges in leader-based protocols, which we illustrate using~\Cref{fig:hangover}.

In~\Cref{fig:hangover}, we run Raft—the most widely deployed consensus protocol in the cloud—with 5 replicas, with the view timeout configured at 5 seconds. 
At time t = 10s seconds, we simulate an adversarial network condition by introducing a 6-second delay on the links connected to the current leader.
We first observe that the commands issued between \(10\)s and \(15\)s experience significantly high commit latency.
This high latency arises from the view change, as no new commands are committed until the view change is fully completed.
Second, we observe that commands issued after the completion of the view change also experience high commit latency.
This is due to the backlog of commands that accumulated during the view change.
Therefore, existing leader-based consensus protocols are unable to ensure robustness under adversarial network conditions.
In contrast, \sys, our novel protocol, ensures robustness under adversarial network conditions.

\textbf{Multi-leader protocols}~\cite{barcelona2008mencius, moraru2013there, lamport2005generalized, tennage2022baxos} and other extensions of leader-based protocols~\cite{marandi2010ring}\cite{ng23omni}\cite{charapko2021pigpaxos}\cite{ailijiang2017wpaxos} also lose liveness under adversarial network conditions due to their reliance on view change mechanisms. 

\textbf{Randomized consensus protocols}~\cite{BoBandle}\cite{ ben1983another} overcome the adversarial network challenge by using randomization instead of view changes. 
However, they are rarely used in cloud applications due to  
(1) high performance overhead under synchronous network conditions,  
(2) incompatibility with the existing cloud software stack, and  
(3) complexity in understanding and implementing.
In contrast, our protocol \sys provides a practical solution to adversarial network challenges by achieving:  
(1) robustness under adversarial network conditions,  
(2) optimal one round-trip commit latency under synchronous networks, and 
(3) a simple design for fast implementation and seamless industry adoption.

Rabia~\cite{pan2021rabia} and \sys take opposite paths on the simplicity–robustness tradeoff. Rabia~\cite{pan2021rabia} aims for simplicity in synchronous LANs but loses liveness under WAN asynchrony. In contrast, \sys ensures liveness under asynchrony. Rabia’s~\cite{pan2021rabia} leaderless design differs from Paxos/Raft standard leader-based model, requiring ground-up changes in the implementation, making adoption costly. Rabia~\cite{pan2021rabia} has quadratic message complexity, whereas \sys has linear message complexity, under synchrony.

QuePaxa~\cite{pasinduQuePaxa2023} solves the asynchronous consensus challenge, however, significantly differs from \sys in design.
QuePaxa~\cite{pasinduQuePaxa2023} uses randomized local-coins and splits the replica role into separate proposer role and recorder role. 
In contrast, \sys employs global-coins and closely follows the standard Raft model.

Mahi-Mahi~\cite{mahi-mahi}, Tusk~\cite{danezis22narwhal}, and Dag-Rider~\cite{keidar2021all} are “blockchain” asynchronous protocols targeting Byzantine faults, which is out of our scope. 
We omit detailed comparisons due to space and relevance.

\section{\sys}\label{sec:design-racs}

\subsection{\sys Protocol Overview}

\begin{figure}[t]
    \centering
    \includegraphics[width=0.89\linewidth, height=0.09\paperheight]{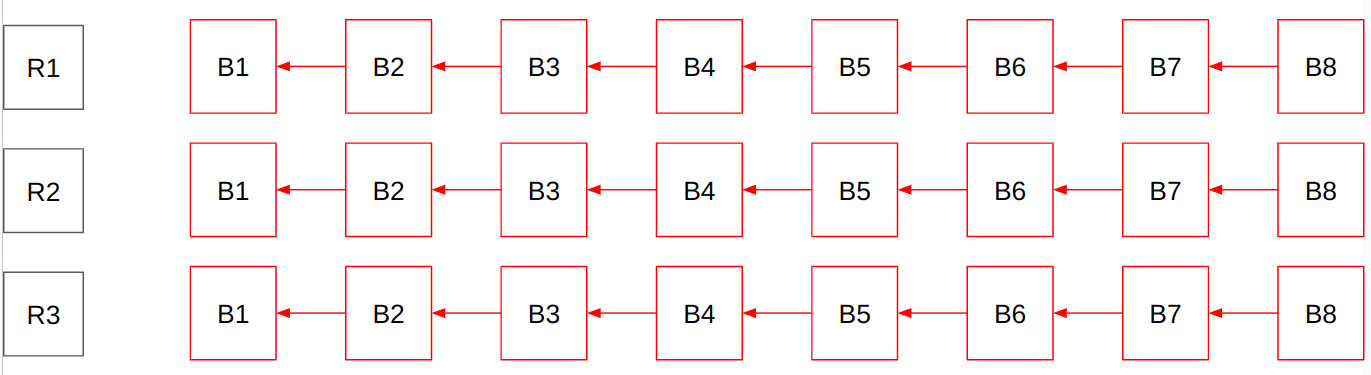}
    \caption{Replicated log of \sys replicas.}
    \label{fig:blocks}
    \vspace{-5mm}    
\end{figure}

\begin{figure}[t]
    \centering
    \includegraphics[width=0.8\linewidth, height=0.1\paperheight]{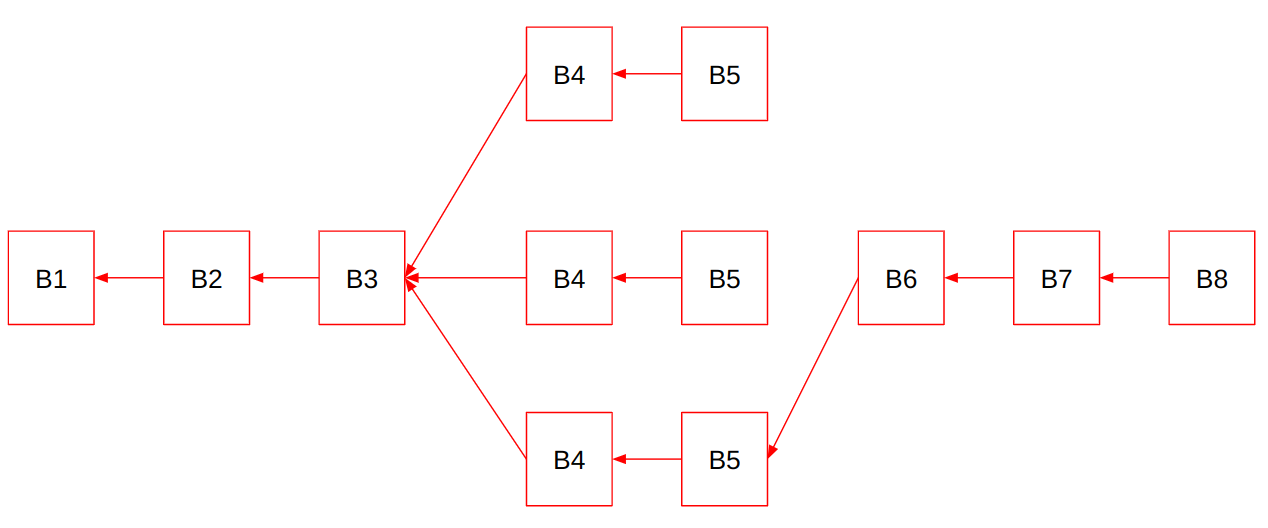}
    \caption{Fallback mode of \sys.}
    \label{fig:blockchain}
    \vspace{-5mm}    
\end{figure}

We introduce \sys, a novel crash fault-tolerant randomized consensus algorithm. \sys achieves three important goals: 
(1) robustness against adversarial network conditions,   
(2) performance comparable to existing leader-based protocols under synchronous network conditions, and
(3) simplicity in design, enabling rapid implementation and seamless integration into existing cloud software stack.

Given the wide adoption of Raft~\cite{ongaro2014search} in cloud software~\cite{kogias2020hovercraft,hashicorp_raft}, we
design \sys by extending Raft.
Building \sys on Raft as the foundation facilitates its rapid adoption in cloud environments, because existing cloud software stacks already support the programming abstractions and infrastructure required by Raft.

As illustrated in~\Cref{fig:blocks}, similar to Raft~\cite{ongaro2014search}, each replica in \sys maintains a sequential log of blocks.
Each block points to the previous block using a pointer.
The purpose of \sys protocol is to ensure that 
each replica commits the same value for each block, despite replica and network failures.

Similar to Raft, \sys employs a leader-based 
design during the synchronous network conditions.
The leader replica in \sys collects a batch of client 
commands, and replicates them 
in at least a majority of the replicas.
Replication under synchronous network conditions
takes one network round trip, the theoretical optimal.
As long as the network conditions remain synchronous
and the leader replica doesn't crash, 
\sys commits commands in one network round trip.

Under adversarial network conditions, Raft 
employs a view change sub-routine, where each replica
votes for a potential new leader replica.
As shown in~\Cref{fig:hangover}, view change based leader
election harms the performance.
In \sys, we eliminate view change-based leader election, marking a fundamental difference from Raft.
Instead of view-change based leader election, \sys 
employs a novel randomized fallback mode
that keeps committing new client commands.
Replacing view change sub routine with the 
randomized fallback mode of operation is the key
difference between Raft and \sys.

\Cref{fig:blockchain} illustrates the randomized fallback mode of \sys from the perspective of a single replica. In~\Cref{fig:blockchain}, 
the network remains synchronous until block B3
is replicated. 
Due to an adversarial network condition, 
the leader replica fails to transmit B4, 
and therefor all the replicas "automatically" enter the randomized fallback mode
of operation.
During the randomized fallback mode, each replica, independent from
other replicas, act as a leader, and 
proposes new blocks.
As shown in~\Cref{fig:blockchain}, in a 3 replica configuration,
the randomized fallback mode results in 3 parallel chains of blocks.
The number of blocks proposed during the randomized fallback mode
can be configured dynamically, but for the purpose of illustration, in this example we use two fallback blocks.
Once each replica has proposed 2 fallback blocks, they consistently chose a single 
chain (the last fallback chain in~\Cref{fig:blockchain}) out of all available fallback chains.
If adversarial conditions persist, another round of fallback mode will be triggered. 
Otherwise, the synchronous path leader of the next view will replicate blocks in a single network round trip, extending the chosen fallback chain.

Replacing view change in Raft with 
a randomized fallback mode helps \sys achieve the three goals mentioned above.
First, because new commands are committed in the randomized fallback mode, \sys remains robust under adversarial network conditions.
Second, during the synchronous periods of execution,
\sys uses a leader-based design, and incurs the same overhead as Raft.
Finally, because \sys builds on Raft,
it is easy to understand and requires fewer additions to the existing Raft implementations~\cite{paxos-raft-code,hashicorp_raft}, enabling rapid adoption in the cloud.
\sys is the first protocol to simultaneously achieve all 3 goals. 

\subsection{\sys Algorithmic Challenges}

\sys has a simple design but faces non-trivial challenges in ensuring agreement and termination.
First, depending on the network conditions, \sys must automatically switch between the synchronous single-chain mode and the fallback multi-chain mode, without manual intervention.
Second, \sys must ensure that each replica commits the same chain of blocks, even when multiple chains exist during adversarial periods -- each replica should consistently choose the same fallback chain at the end of the fallback mode.
\Cref{subsec:racsdescription} addresses these challenges
and presents the \sys algorithm.

\begin{figure*}
    \centering    \includegraphics[height=0.65\paperheight, width=0.77\linewidth]{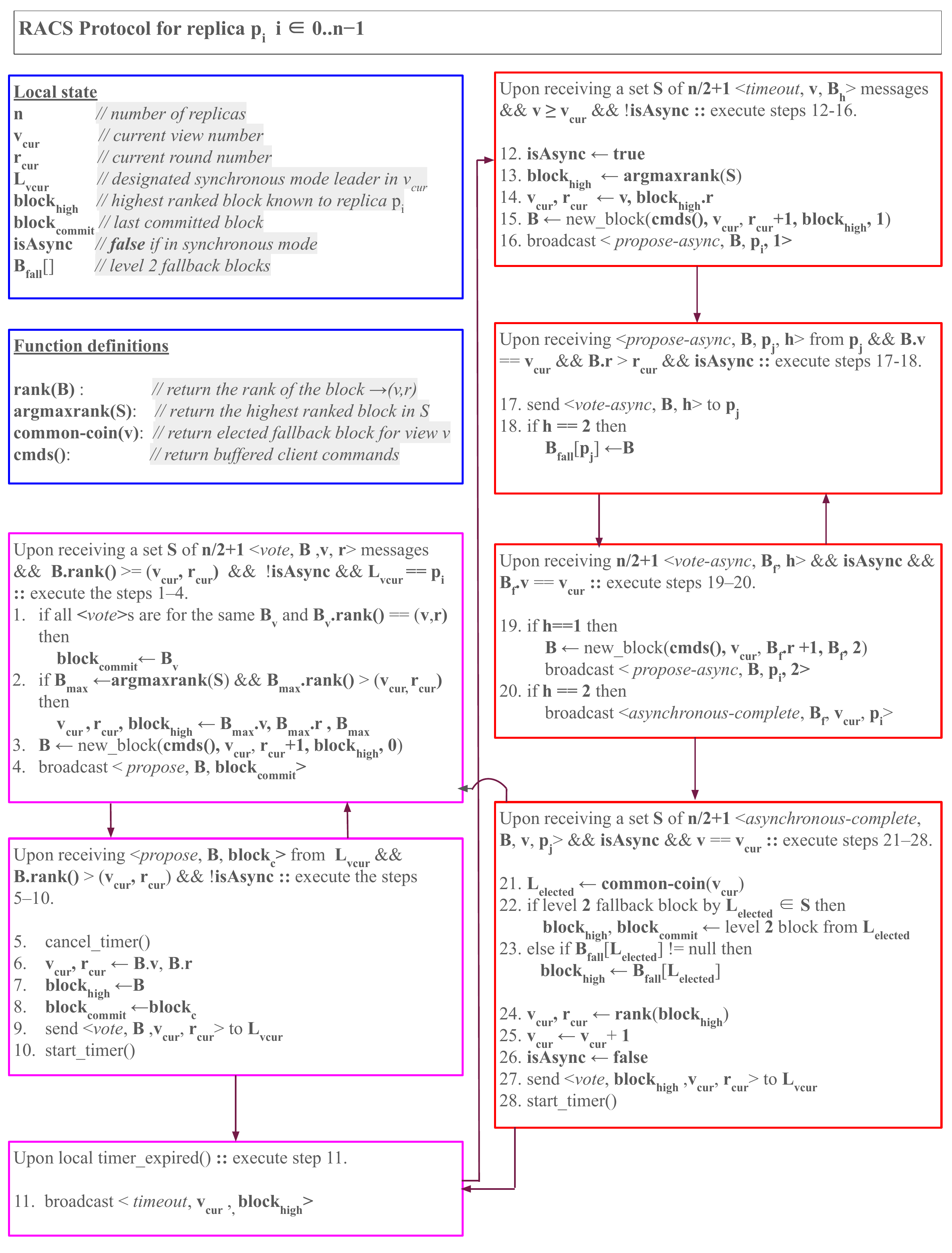}
    \caption{\sys Algorithm.}
    \label{fig:racs_algo}
    \vspace{-5mm}    
\end{figure*}

\subsection{\sys Algorithm}
\label{subsec:racsdescription}

Similar to Raft, \sys progresses as a sequence of \textbf{views} $v$
and \textbf{rounds} $r$ where each view has one or more rounds.
A view consists of (1) a synchronous mode with a predefined leader \( L_{\text{vcur}} \) and (2) a fallback mode of operation.
In~\Cref{fig:blockchain} blocks $B1$--$B5$
are from the same view, where blocks $B1$-$B3$ are synchronous mode blocks and blocks $B4$-$B5$ are fallback mode blocks.
Blocks $B6$--$B8$
are from the next view.
A round represents the successive log positions in the replicated log.
Round numbers are monotonically increasing.
The pair $(v,r)$ is called a \textbf{rank}.

\textbf{Block Format}: There are two types of
\sys blocks: (1) \textbf{synchronous blocks}
and (2) \textbf{fallback blocks}.
Both types of blocks consist of five fields:
(1) batch of client commands,
(2) view number,
(3) round number,
(4) parent block,
and (5) level.
The rank of a block is ($v, r$)
and blocks are compared lexicographically by their rank:
first by the view number, then by the round number.
The blocks are connected in a chain 
using the parent links.
We denote that block $A$ \textbf{extends}
block $B$
if there exists a set of blocks
$b\textsubscript{1}, b\textsubscript{2},
b\textsubscript{3}, .. b\textsubscript{k}$ such that there
exists a parent link
from $b\textsubscript{i}$ to $b\textsubscript{i-1}$
$\forall$ $i$ in $range(2, k)$
and $b\textsubscript{1}$ = $B$
and $b\textsubscript{k}$ = $A$.
The level field of the block
refers to the fallback level.
For the synchronous blocks, the level is always set to $0$.

% \sys overview

\cref{fig:racs_algo} depicts the pseudo-code of \sys. 
In the following discussion, when we say replica $p_i$
delivers a block $B$ from replica $p_j$,
we imply that replica $p_i$ delivers $B$ and the
causal history of $B$.

% sync path

\textbf{Synchronous mode}: The synchronous mode of \sys
is leader-based and is
similar to Raft, as depicted in
lines 1--11 of~\Cref{fig:racs_algo}..
The synchronous mode leader $L_{vcur}$
for each view $v$ is
predetermined and known to all replicas on bootstrap,
for example $L_{vcur} = v_{cur} \% n $.
As we discuss below, $L_{vcur}$ begins its tenure
only after receiving a quorum of $<$\textit{vote}$>$ messages,
which allows $L_{vcur}$ to learn the most up-to-date block—a condition indirectly achieved in Raft~\cite{ongaro2014search} by electing the most up-to-date replica during view change.

The synchronous mode begins either
at the very start of the protocol
or after a fallback mode has ended.
Upon collecting a $n/2+1$ number of $<$\textit{vote}$>$
messages, 
the leader replica 
forms a new block $B$,
that extends the $block_{high}$ (see line 3 of~\Cref{fig:racs_algo}).
The leader then
broadcasts a $<$\textit{propose}$>$
message for the block $B$
and the
reference of the last committed block $block_{commit}$
(see line 4 of~\Cref{fig:racs_algo}).

Each replica $p_i$ delivers $<$\textit{propose}, B, $block_c$$>$ message,
if the rank of $B$ is greater than the rank of $p_i$
and if $p_i$ is in the synchronous mode
of operation.
If these two conditions are met,
then $p_i$ first updates its
$block_{high}$, $v_{cur}$, $r_{cur}$ 
(see line 6--7 of~\Cref{fig:racs_algo})
and then 
commits the block $block_c$
(see line 8 of~\Cref{fig:racs_algo}).
$p_i$ then sends $p_i$'s $<$\textit{vote}$>$ for $B$ to $L_{vcur}$ (see line 9 of~\Cref{fig:racs_algo}).

Upon receiving $n/2+1$ $<$\textit{vote}$>$ messages for $B$ (see line 1 of~\Cref{fig:racs_algo}),
the leader replica commits $B$ (and the causal history). 
As long as the synchronous mode
leader $L_{vcur}$ is responsive,
the synchronous mode will continue,
committing a new block
in one network round trip.

% timeout

Each replica has a timeout clock
which is reset whenever the replica
receives a new $<$\textit{propose}$>$ message (see line 5 of~\Cref{fig:racs_algo}).
If the timeout expires (see line 11 of~\Cref{fig:racs_algo}),
however,
they will broadcast a $<$\textit{timeout}$>$ message containing the $block_{high}$.

\textbf{Fallback mode}: 
The number of blocks proposed during 
the fallback mode is a tunable parameter
that we explain in~\Cref{subsec:tuningfallback}.
For simplicity of illustration, we use 2 fallback
blocks per view in the following description.
Upon receiving $n/2+1$ $<$\textit{timeout}$>$ messages,
\sys enters the fallback mode
of operation.
In the fallback mode,
all replicas act as leaders.
Each replica takes the highest $block_{high}$
they are aware of,
forms a level $1$ fallback block $B$
with a monotonically increasing rank
compared to the highest $block_{high}$
it received and sends a $<$\textit{propose-async}$>$
message (see line 12--16 of~\Cref{fig:racs_algo}).

Upon receiving a $<$\textit{propose-async}$>$ message from $p_j$,
each replica $p_i$ sends
back a $<$\textit{vote-async}$>$ message to $p_j$
if the rank of the proposed level $1$ block
is greater than the highest-ranked block witnessed
so far (see line 17 of~\Cref{fig:racs_algo}).

Upon receiving $n/2+1$ $<$\textit{vote-async}$>$ messages
for the level $1$ fallback block $B_f$,
each replica will send a level $2$
fallback block $B$ (see line 19 of~\Cref{fig:racs_algo}). 
The algorithm allows catching up to a
higher ranked block by building upon another
replica's level $1$ block.
This is meant to ensure termination
for replicas that fall behind.
All replicas, upon receiving a $<$\textit{propose-async}$>$ message 
for a level 2 fallback block from $p_j$
send a $<$\textit{vote-async}$>$ message to $p_j$
(see line 17--18 of~\Cref{fig:racs_algo}).

Once $n/2+1$ $<$\textit{vote-async}$>$s have been gathered
for the level $2$ fallback block $B_{f}$, 
each replica $p_i$
broadcasts an $<$\textit{asynchronous-complete}$>$ message (see line 20 of~\Cref{fig:racs_algo}).

Once $n/2 + 1$ replicas have submitted $<$\textit{asynchronous-complete}$>$ messages, the fallback mode is complete. The remaining task is to consistently select one chain from the set of fallback chains.
This step is crucial for maintaining agreement. If two replicas select different fallback chains, the system loses the guarantee that all replicas commit the same sequence of blocks.
To ensure that each replica
consistently chooses the same chain, 
we employ a \textbf{common-coin},
an existing technique~\cite{ben1983another,pan2021rabia} used in distributed algorithms.

\textbf{Common-coin}: 
For each view $v$, common-coin($v$)
returns a positive integer in the range ($0$, $n-1$)
where $n$ is the total number of replicas.
The invocation of common-coin($v$) for a given view $v$, at each replica returns
the same value.

\sys uses a common-coin mechanism, where each replica runs a consistent random number generator with a pre-agreed seed.
When used with the same seed, consistent random number generation protocols produce the same output sequence. 
Common-coin enables \sys to consistently choose a fallback chain after completing fallback mode.

Upon receiving $n/2+1$ $<$\textit{asynchronous-complete}$>$ messages,
each replica individually flips a common-coin
(see line 21 of~\Cref{fig:racs_algo})
to get the elected leader L\textsubscript{elected}.
Each replica commits a level $2$ fallback block
$B$ from L\textsubscript{elected}
if $B$ arrived amongst the
first $n/2+1$ $<$asynchronous-complete$>$ messages (see line 22 of~\Cref{fig:racs_algo}).
If a replica observes that 
the level $2$ fallback block from L\textsubscript{elected} 
does not appear in the first $n/2+1$ $<$asynchronous-complete$>$ messages, but appears in B\textsubscript{fall}[L\textsubscript{elected}], then, the replica
sets block\textsubscript{high} to B\textsubscript{fall}[L\textsubscript{elected}]
(see line 23 of~\Cref{fig:racs_algo}). 
% If a replica observes that the level 2 fallback block
% from B\textsubscript{elected} is not in the first $n/2+1$
% $<$asynchronous-complete$>$ messages and B\textsubscript{fall}[L\textsubscript{elected}] 
% is null, then the replica doesn't update its block\textsubscript{high} and 
% block\textsubscript{commit}.
All replicas then exit the
fallback mode and resume the synchronous
path by uni-casting a $<$\textit{vote}$>$ message
to the synchronous leader L\textsubscript{vcur} of
the next view (see line 27 of~\Cref{fig:racs_algo}).

\sys fallback mode ensures liveness under adversarial network conditions; it only requires receiving at least $n/2+1$ messages in each asynchronous step. 
This quorum condition is trivially satisfied under the standard assumption that fewer than $n/2$ nodes can fail and with reliable point-to-point links (TCP). 
Thus, delayed messages or timeout settings do not impact the asynchronous liveness guarantee of \sys.

\textbf{\sys protocol summary}: Based on the above \sys explanation, its outcomes are summarized as follows. 
\sys ensures robustness against adversarial network conditions by committing new commands during the randomized fallback mode.
\sys commits client commands in one network round trip, and hence provides optimal synchronous normal case latency.
Finally, because \sys follows a simple design and extends Raft, it enables seamless integration with the existing cloud software stack. 
Although the \sys fallback mode modifies Raft~\cite{ongaro2014search} in non-trivial ways, the fallback mode employs the same replica and network structures, without requiring any additional cloud software stack changes. 
Thus, \sys remains fully compatible with existing cloud software stack and requires only incremental integration effort.

% proof intuitions
\subsection{Correctness and Complexity}\label{subsec:correctness-intu}

The formal proofs have been
deferred~\Cref{sec:formal-proof}. 
% to the extended paper~\cite{tennage2025sadlracs}.

% \sys only commits blocks proposed by replicas, thereby trivially satisfying the \textbf{validity property}.
% Because \sys does not commit the same block more than once, it trivially satisfies the \textbf{integrity property}.
% In the following two theorems, we focus on the agreement and termination properties.

    \textbf{Agreement property}: \sys satisfies the agreement property by ensuring that 
    if a block $B$ is committed in 
    any round $r$, then all the blocks
    with round $r' \geq r$ will extend $B$.

    % \textit{Proof}: A block $B$ can be committed
    % in 2 instances; (1) synchronous
    % mode and (2) fallback mode.

    % In the synchronous mode,
    % a block is committed by L\textsubscript{vcur}
    % when a majority of replicas
    % vote for the synchronous block
    % in the same round.
    % Hence, if the leader commits
    % a block $B$, then it is guaranteed
    % that at least a $n/2+1$ majority of the replicas
    % have $B$ as their $block_{high}$,
    % or have $B$ in the causal history of $block_{high}$.
    % Due to quorum intersection, all the future blocks will extend $B$ because each new block extends block\textsubscript{high}.
    
    % In the fallback mode,
    % the only fallback block
    % $B$ of level $2$ sent by
    % L\textsubscript{elected}
    % is committed
    % if it is among the first $n/2+1$
    % $<$asynchronous-complete$>$ messages received in that view,
    % which implies that a majority of the replicas
    % have received and saved $B$ in B\textsubscript{fall}[L\textsubscript{elected}].
    % Hence if at least one node
    % commits $B$ in the fallback mode,
    % then it is guaranteed that at least a $n/2+1$ number of the replicas set $B$ as $block_{high}$, thus extending $B$ in the next round.

    % Therefore, we claim that agreement property holds in
    % \sys.
    
    \textbf{Termination property}: \sys ensures termination under both synchronous and adversarial
    network conditions, due to common-coin based randomization.

    % \textit{Proof}: During the synchronous mode, commands are committed in each round by the leader after receiving $<$\textit{votes}$>$ from $n/2 + 1$ replicas, thereby satisfying the termination property.
    
    % In 
    % the fallback mode,
    % if more than $n/2+1$ replicas
    % enter the fallback mode,
    % i.e. there are less than $n/2$
    % replicas on the
    % synchronous mode,
    % i.e. the synchronous path cannot progress anymore, therefore all replicas eventually enter the fallback mode.
    % The fallback mode
    % (with at least $n/2+1$ correct replicas in it)
    % will eventually reach a point where
    % $n/2+1$ correct replicas have sent
    % $<$asynchronous-complete$>$ messages.
    % The common-coin($v$) will therefore
    % take place.
    % If the result of the common-coin($v$)
    % lands on one of the first $n/2+1$
    % replicas to have submitted an
    % $<$asynchronous-complete$>$ message,
    % then a new block is committed.
    % Since the coin is unbiased,
    % there is a probability $p$$>$$\frac{1}{2}$ that
    % a new fallback block is committed in a given fallback mode of execution.
    % Each replica therefore commits a new block in 
    % under 2 views in expectation
    % and commits eventually with
    % probability 1, thus ensuring the protocol’s termination.

\textbf{Complexity} The synchronous mode of \sys
has a linear message and bit complexity
for committing a block.
The fallback mode of \sys has
a complexity of $O(n^{2})$.

\section{\sys Optimizations}
\label{sec:optimizations}

\subsection{Tuning fallback blocks}
\label{subsec:tuningfallback}

In~\Cref{sec:design-racs}, we described \sys with two fallback blocks per view. In practice, fixed blocks per view are suboptimal under unpredictable adversarial conditions.

In \sys, we dynamically adjust the number of blocks in fallback mode using Multi-Armed Bandit techniques~\cite{li2020multi}.
We bootstrap \sys with 2 blocks per fallback mode, and each replica monitors $\textit{s}$: the number of times the algorithm switches between synchronous and fallback mode.
A high \textit{s} implies a network where adversarial conditions persist, whereas a low \textit{s} indicates a transient network problem.
\sys employs a Multi-Armed Bandit-based explore/exploit strategy, increasing the number of blocks per fallback mode when $\textit{s}$ increases and decreasing it when $\textit{s}$ decreases.
\sys uses the same technique as Raft's reconfiguration to consistently configure \textit{s} across all replicas. 
Further details are omitted due to space.

\subsection{Application specific optimizations}

Cloud applications that employ consensus protocols can be broadly categorized into two groups based on their performance objectives:  
(1) \textbf{Class 1}: applications that require moderate throughput and low latency~\cite{burrows2006chubby}~\cite{etcd}~\cite{hunt2010zookeeper} and  
(2) \textbf{Class 2}: applications that require high throughput and can tolerate moderately high latency~\cite{kreps2011kafka}~\cite{apachehadoop}.

For class 1 applications, we employ \textit{pipelining}, 
a well-established technique in distributed algorithms~\cite{pasinduQuePaxa2023,lamport2001paxos}.
In pipelined-\sys, the synchronous mode leader $L_{\text{vcur}}$ proposes multiple blocks before receiving $n/2 + 1$ $<$\textit{vote}$>$ messages,
allowing multiple blocks to be committed within a single round-trip network latency.

\subsubsection{Optimization for class 2 applications}

\begin{figure}
    \centering    \includegraphics[height=0.35\paperheight,width=0.75\linewidth]{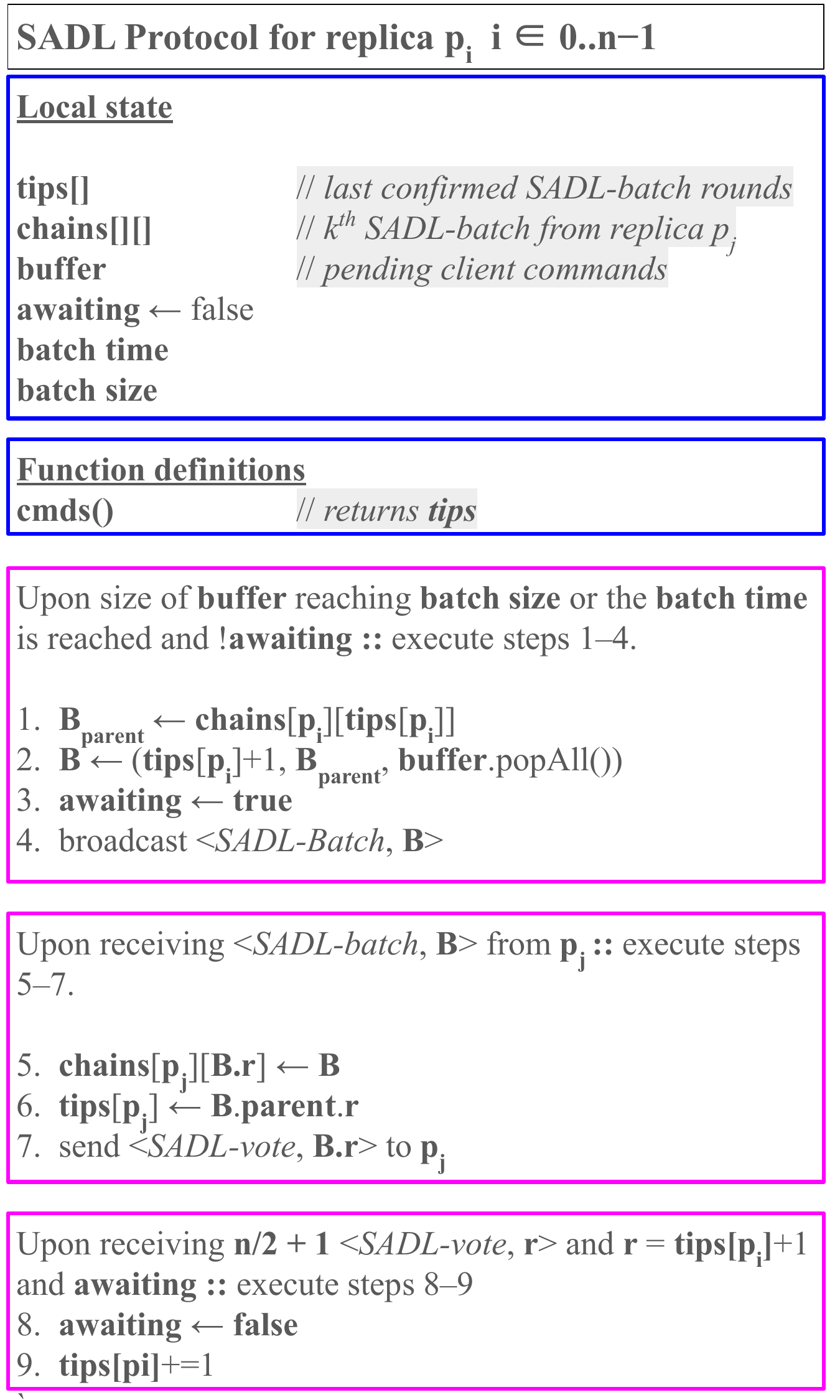}
    \caption{\mempool Algorithm.}
    \label{algo:mempoolalgorithm}
    \vspace{-5mm}
\end{figure}

\sys blocks contain two types of data: (1) metadata about the block such as $v$, and $r$, and (2) a batch of client commands.
Typically, the metadata is only a few bytes in size, while the batch of client commands can span several kilobytes.
Bundling both metadata and client commands within \sys blocks increases the block size, which limits the system's throughput due to the bandwidth bottleneck at the leader replica.
This is a significant limitation of \sys, particularly for class 2 applications that demand high throughput.

To address this limitation, we propose \mempool, a novel command dissemination layer that decouples command dissemination from the critical path of the consensus protocol.
\mempool is a consensus-agnostic command dissemination layer designed to reliably distribute client commands to at least a majority of replicas, ensuring efficient and fault-tolerant command propagation.
\mempool provides fault tolerance and availability for individual client batches, but does not guarantee a total order of commands.
Once \mempool replicates commands across a majority of replicas, \sys layer no longer needs to include the commands in it's blocks. 
Instead, \sys only embeds a small, fixed-size "reference" to the batch already disseminated by \mempool.
Using a fixed-size reference instead of the entire command batch significantly reduces the \sys leader replica's bandwidth consumption, leading to a substantial improvement in throughput, as demonstrated in~\Cref{sec:eval}.

\mempool design significantly differs from existing dissemination approaches.
Unlike existing dissemination mechanisms~\cite{danezis22narwhal,spiegelman2022bullshark}, where the dissemination layer advances in a lock-step fashion, \mempool allows each replica to build its own chain of \mempool blocks independently of the pace of other replicas.
This key difference enables \mempool to achieve higher throughput than existing approaches. We present this subtle yet non-trivial distinction as a novel systems contribution of this paper.

\textbf{\textit{\mempool Protocol}}: 
\Cref{algo:mempoolalgorithm} depicts the \mempool pseudo code.
In \mempool, each replica concurrently proposes \mempool batches. 
For clarity of presentation, we describe the algorithm from the perspective of a single replica $p_i$.
Upon receiving a batch of commands, replica $p_i$ creates a new \mempool batch \textbf{B}, extending its last created \mempool batch $B_{parent}$, and broadcasts a $<$\textit{\mempool-Batch, $\textbf{B}$}$>$ message (see lines 1--4 of~\Cref{algo:mempoolalgorithm}).
Other replicas, upon receiving a new $<$\textit{\mempool-Batch}, \textbf{B}$>$, store \textbf{B} in \textbf{chains}, update \textbf{tips} to the round number of the parent of \textbf{B}, and send a $<$\textit{\mempool-vote}$>$ to $p_i$
(see lines 5--7 of~\Cref{algo:mempoolalgorithm}).
Replica $p_i$, upon receiving $n/2 + 1$ $<$\textit{\mempool-vote}$>$ messages for its last proposed block \textbf{B}, updates the \textbf{tips}
(see line 9 of~\Cref{algo:mempoolalgorithm})
and starts proposing a new \mempool-batch (see lines 1--4 of~\Cref{algo:mempoolalgorithm}).

Replica $p_i$ broadcasts the next \mempool batch only after receiving $n/2 + 1$ votes for the previous \mempool batch it proposed.
Hence, any replica, upon receiving a new \mempool batch $B$ from $p_i$, can be certain that all \mempool batches from $p_i$, prior to $B$ have already been replicated in at least a majority of the replicas.
This property of \mempool enables it to reference a large batch of client commands using a fixed-size integer array of several bytes.
The \textbf{tips} integer array of size \textbf{n} stores the last successfully committed \mempool batch for each replica, thereby allowing the entire set of client batches to be represented using a fixed-size integer array.

\textbf{Using \mempool with \sys}:
With \mempool in place, \sys blocks no longer need to carry the heavy client batches and instead include only the \textbf{tips} integer array as the proposal for consensus.
Upon committing a block in \sys, \sys commits all uncommitted batches in all \mempool chains up to the last block pointed to by the \textbf{tips} array, effectively committing a consistent cut of the \mempool chains.
Since \textbf{tips} is an array of \textbf{n} elements, the size of \sys blocks is drastically reduced, effectively eliminating the leader bottleneck problem in the \sys synchronous mode.
As we show in~\Cref{sec:eval}, \mempool significantly increases throughput, albeit with a modest increase in latency.
Therefore, for class 2 applications that require high throughput but can tolerate moderately high latency, \mempool is well-suited.
\section{Implementation}\label{sec:impl}
We implemented pipelined-\sys and \mempool-\sys using Go version 1.19~\cite{go-lang},
in 3661 and 4631 lines of codes, respectively.
We used the standard Go network library and Protobuf encoding~\cite{protobuf}. 
We used a single-threaded event based design for \sys's main event handling logic. 
All events—message receipt, timeouts, etc.—are handled by a single thread, ensuring deterministic event execution. 
For better I/O performance, we use multiple threads for TCP message reception, as done in prior work~\cite{pan2021rabia, pasinduQuePaxa2023, paxos-raft-code}. Both \sys and \mempool-\sys implement batching in both clients and replicas. 
SADL-RACS implementation is publicly available on github\footnote{\href{SADL-RACS github}{https://github.com/ISTA-SPiDerS/sadl-racs}}.
\section{Experimental evaluation}\label{sec:eval}

% claims
This evaluation demonstrates the following 4 claims.

\begin{compactitem}
    \item \textbf{C1}: \sys offers robustness against adversarial network conditions.
    \item \textbf{C2}: Under synchronous network conditions, \sys performs comparably to existing leader-based algorithms.
    \item \textbf{C3}: \mempool-\sys delivers high throughput for class 2 applications that demand high throughput.
    \item \textbf{C4}: \mempool improves the scalability of \sys w.r.t (\textbf{C4.1}) increasing replica count and (\textbf{C4.2}): increasing payload size.
\end{compactitem}

\vspace{2mm}

Since adversarial networks 
are much more common in the wide-area network (WAN) than in the local-area
network (LAN),
we focus on the
WAN deployments in our evaluation,
however, for completeness of experiments, we also compared the performance of \sys
in a LAN in \cref{sec:eval-sys-lan-performance}.

We compare \sys's and \mempool-\sys's performance against
three state-of-the-art SMR algorithms:
Raft~\cite{ongaro2014search}
\cite{paxos-raft-code},
Multi-Paxos~\cite{lamport2001paxos}
\cite{paxos-raft-code}, 
and
EPaxos~\cite{moraru2013there}
\cite{epaxos-code-modified}. 
Raft and Multi-Paxos are the most widely used consensus algorithms in cloud environments.
EPaxos is a multi-leader protocol
that enables parallel commits of non-interfering commands.

\textbf{Setup}:
\label{subsec:evaluation-setup}
We test both
a WAN setup
where the replicas and clients are distributed globally
across AWS cloud regions Sydney, Tokyo, Seoul, Osaka, and Singapore and
a LAN setup
where all replicas and clients are located in North California AWS cloud. 

For WAN experiments, 
we chose t2.xlarge EC2 instances~\cite{amazon-ec2} with 4 vCPUs and 16 GB memory.
For LAN experiments, we use instances
of type c4.4\-xlarge (16 virtual CPUs, 30 GB memory)
for replicas and clients.
Since CPU is often the bottleneck in LAN settings—where network latency is low and available bandwidth is high—it is standard to use machines with more CPUs for LAN experiments and we followed this established practice.
To ensure a fair, apples-to-apples comparison, 
we ran Raft, Multi-Paxos, and EPaxos on the same testbed. 
We use Ubuntu Linux 20.04.5 LTS~\cite{ubuntu}.

\textbf{Workload and Benchmarks}:
Following the existing implementations Multi-Paxos
and Raft \cite{paxos-raft-code},
we use a $map[string]string$ key-value store and Redis\cite{go-redis} 
as backend applications.

In our experiments, we use $n$ replicas and $n$ clients.
Clients generate client commands with a Poisson distribution
in the open-loop model~\cite{schroeder2006open}.
All algorithms employ batching in both clients and replicas.
A single client command is a 17 bytes string:
1-byte GET/PUT opcode plus 8-byte keys and values,
consistent with command sizes
used in prior research and production
systems~\cite{pan2021rabia, bronson2013tao}.

As the initial leader replica of Multi-Paxos~\cite{lamport2001paxos},
Raft~\cite{ongaro2014search}, and \sys, we selected the replica located in Tokyo AWS region. 
We tested with other initial leader replica locations (Sydney, Seoul, Osaka, and Singapore), however, observed no noticeable performance difference. 
In Multi-Paxos and Raft, once a view timeout occurs, the next leader is selected uniformly at random -- whoever initiates and successfully completes the view-change sub-routine becomes the leader for the subsequent view. 
In \sys, we use a deterministic round-robin next leader scheme for each new view.

For \sys, \mempool-\sys, Multi-Paxos, and Raft we measure the
client observed end-to-end execution latency, which accounts for the 
latency overhead for total ordering and executing commands.
EPaxos provides two modes of operations: (1) partial ordering of commands without execution (denoted ``EPaxos-commit'' in the graphs) and 
(2) partial ordering of commands
with execution (denoted ``EPaxos-exec'' in the graphs).
Trivially, ``EPaxos-commit''
outperforms \sys and \mempool-\sys
because ''EPaxos-commit'' only provides a partial
order of commands, which enables higher parallelism.
Hence, ''EPaxos-commit'' provides an apples-to-oranges comparison, however, we 
present the results in this evaluation, for completeness.
We also found and reported bugs in the existing implementation of
EPaxos code
that prevent execution under adversarial network conditions, crashes,
and when deployed with more 
than 5 replicas.
Hence we use EPaxos only under normal-case
performance evaluation.

We measure throughput in commands per second (cmd/sec),
where a command is one 17-byte command.
We measure the latency in milliseconds.

\begin{figure}[t]
    \centering
    \includegraphics[scale=0.45]{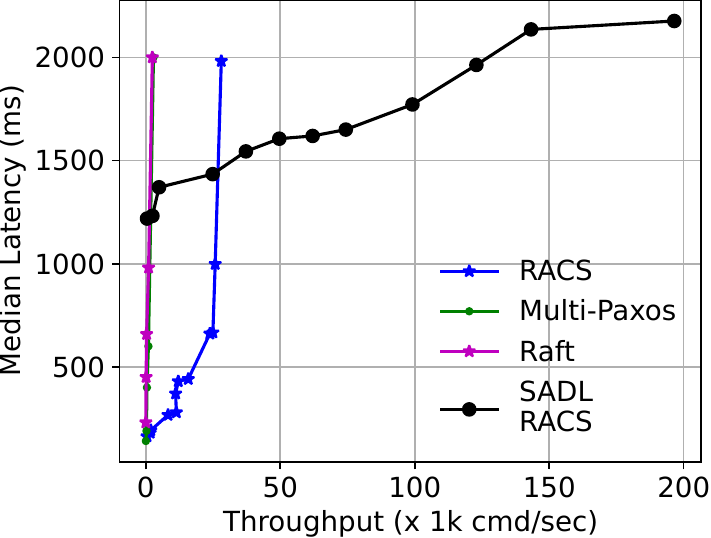}
    \caption{Adversarial Performance in the WAN with 5 replicas.}
    \label{fig:asynchrony}
    \vspace{-6mm}
\end{figure}

\subsection{Asynchronous Performance}\label{subsec:eval-asynchronous}

A methodology to comprehensively evaluate consensus protocol performance
under asynchronous network conditions remains an open research problem~\cite{tennageconsenstress}.
The only published method for asynchronous evaluation appears in~\cite{pasinduQuePaxa2023}, which we follow in this paper.

This experiment evaluates \sys and \mempool-\sys
under simulated network attack scenarios.
The attacker increases the egress packet latency of a randomly
selected minority of replicas by 500ms in dynamic time epochs.
The experiment runs in a WAN setting with 5 replicas and 5 clients.
We measured throughput, latency, downtime, and time-to-recover, but due to space limits and to align with recent literature~\cite{mahi-mahi, pasinduQuePaxa2023}, 
we report only throughput and latency.
We depict the results in \Cref{fig:asynchrony}.

\textbf{\sys vs \{Multi-Paxos and Raft}\}:We observe that
\sys provides 28k cmd/sec saturation throughput,
in contrast, Multi-Paxos and Raft 
have saturation throughput at 2.8k cmd/sec.
Under adversarial network conditions 
Multi-Paxos and Raft undergo repeated view changes,
and fail at successfully committing commands.
In contrast, due to the fallback mode of execution,
\sys provides liveness under adversarial network conditions.
Hence we prove the claim \textbf{C1}.

\textbf{\sys vs \mempool-\sys}:
We observe that \mempool-\sys delivers 196k saturation throughput,
thus providing 
%196-28%
168k cmd/sec more throughput than \sys.
Because \mempool-\sys eliminates the leader bottleneck and the \mempool protocol is robust against adversarial network conditions (as it does not rely on timeouts), \mempool-\sys achieves higher throughput than \sys, even under adversarial network conditions.
\begin{figure*}[t]
    \centering
    \begin{subfigure}[b]{0.24\linewidth}
        \centering
        \includegraphics[width=0.9\textwidth]{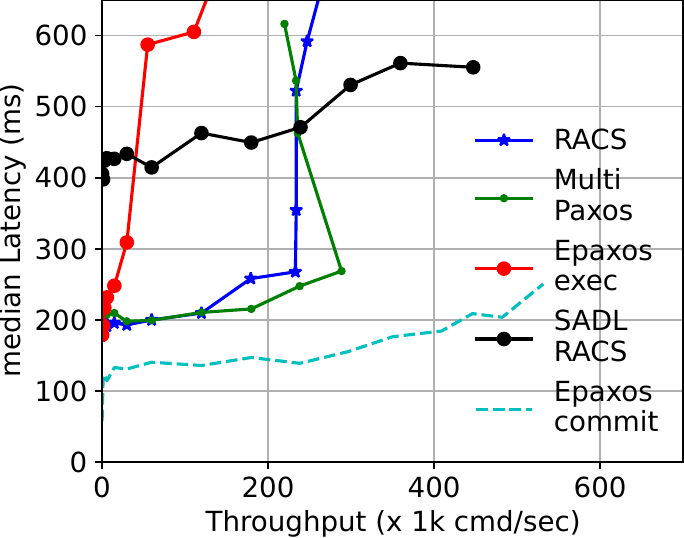}
        \caption{Median Latency 3 replicas}
        \label{fig:normal_case_performance_wan_median_3_replicas}
        
    \end{subfigure}
    \begin{subfigure}[b]{0.24\linewidth}
        \centering
        \includegraphics[width=0.9\textwidth]{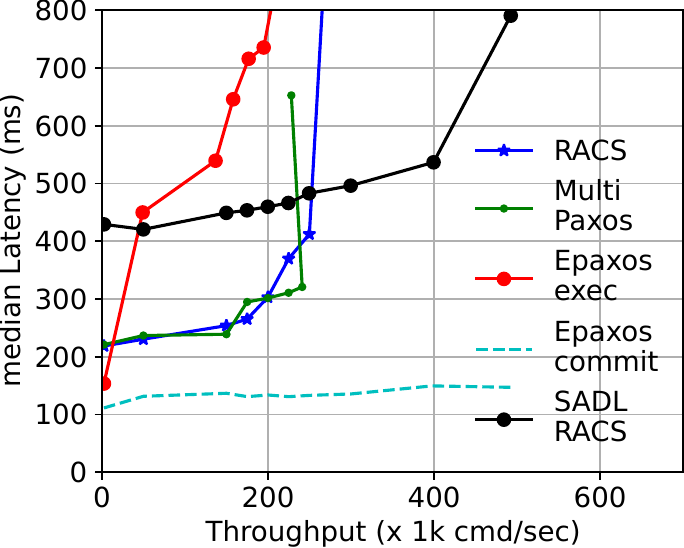}
        \caption{Median Latency 5 replicas}
        \label{fig:normal_case_performance_wan_median_5_replicas}
        
    \end{subfigure}
    \begin{subfigure}[b]{0.24\linewidth}
        \centering
        \includegraphics[width=0.9\textwidth]{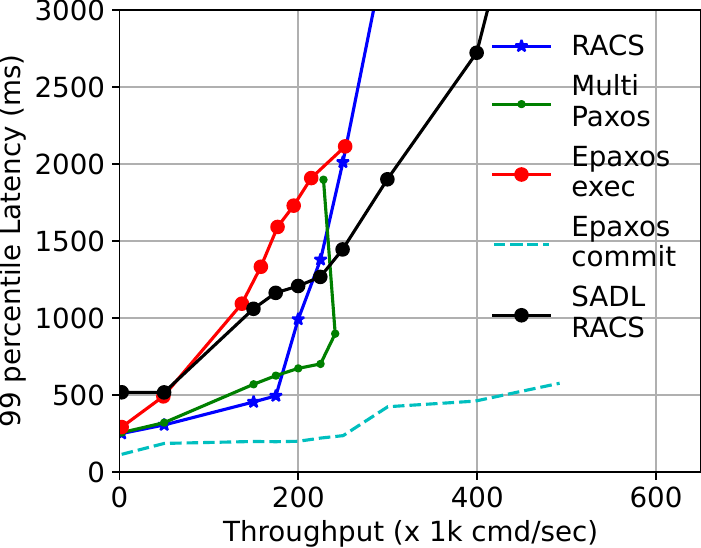}
        \caption{99\% Latency 5 replicas}
        \label{fig:normal_case_performance_wan_tail_5_replicas}
        
    \end{subfigure}    
    \begin{subfigure}[b]{0.24\linewidth}
        \centering
        \includegraphics[width=0.9\textwidth]{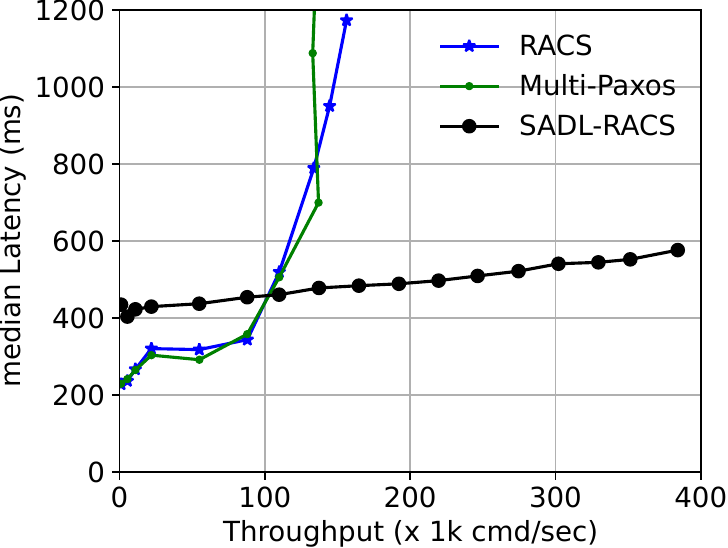}
        \caption{Median Latency 11 replicas}
        \label{fig:normal_case_performance_wan_median_11_replicas}
        
    \end{subfigure}
    
    \caption{Throughput versus latency for WAN normal-case execution for 3--11 number of replicas.
    }
    \label{fig:normal_case_wan_performance_replicas}
    \vspace{-4mm}    
\end{figure*}

\subsection{\sys WAN Normal Case Performance}
\label{subsec:eval-wan-racs-performance}

In this experiment, we evaluate the
normal-case synchronous performance of \sys
deployed in 5 geographically distant AWS regions.
Because we observed the same throughput and latency for Raft and Multi-Paxos (due to their identical structure during normal-case replication), we only depict the Multi-Paxos performance for clarity in the figures.
\cref{fig:normal_case_performance_wan_median_5_replicas} and \cref{fig:normal_case_performance_wan_tail_5_replicas}
depict the experimental results using 5 replicas
and 5 clients.

\textbf{\sys vs Multi-Paxos}: We observe 
in \cref{fig:normal_case_performance_wan_median_5_replicas} 
that \sys delivers a saturation throughput of 200k cmd/sec throughput
under 300ms median latency, which is comparable to
the performance of Multi-Paxos (200k cmd/sec under 300ms latency).
In the synchronous execution both \sys and Multi-Paxos
have 1 round trip latency per batch of commands,
hence share the same performance characteristics.
Hence the experimental claim \textbf{C2} holds.

\textbf{\sys vs Epaxos-commit:} We observe in 
~\cref{fig:normal_case_performance_wan_median_5_replicas}
that EPaxos-commit (without command execution)
delivers a throughput of 500k+ cmd/sec under 170ms median latency.
The EPaxos-commit experiment employs a conflict rate of 
2\%\cite{tollman2021epaxos}
hence 98\% of the time, commands are 
committed in one round trip, without 
serializing through a leader replica.
In contrast, \sys builds a total order of commands,
serialized using a leader-replica,
hence naturally the \sys performance is bottlenecked by the leader 
replica's capacity.

\textbf{\sys vs Epaxos-exec:}
As shown in \cref{fig:normal_case_performance_wan_median_5_replicas},
the median latency of EPaxos-exec (with command execution)
is 300ms higher on average than \sys
in the 50k--200k cmd/sec throughput range.
This higher latency stems
from EPaxos's dependency management
cost~\cite{tollman2021epaxos,matte2021scalable}.
Hence, we conclude that when measured for execution latency, 
\sys outperforms EPaxos.

EPaxos is suited for applications that require only a partial order, 
such as key-value stores where per-key ordering is sufficient.
In contrast, \sys is designed for applications that need a total order of all commands.

\textbf{\sys vs \mempool-\sys:}
As shown in \cref{fig:normal_case_performance_wan_median_5_replicas},
\mempool-\sys achieves a throughput of 500k cmd/sec, exceeding \sys by 300k cmd/sec, 
however, with an additional latency cost of 500ms. 
These results demonstrate the effectiveness of \mempool in eliminating the leader bottleneck in \sys for class 2 applications that require high throughput but can tolerate moderately high latency, supporting the claim \textbf{C3}.

\begin{figure*}[t]
    \centering
    \begin{subfigure}[b]{0.24\linewidth}
        \centering
        \includegraphics[width=0.9\textwidth]{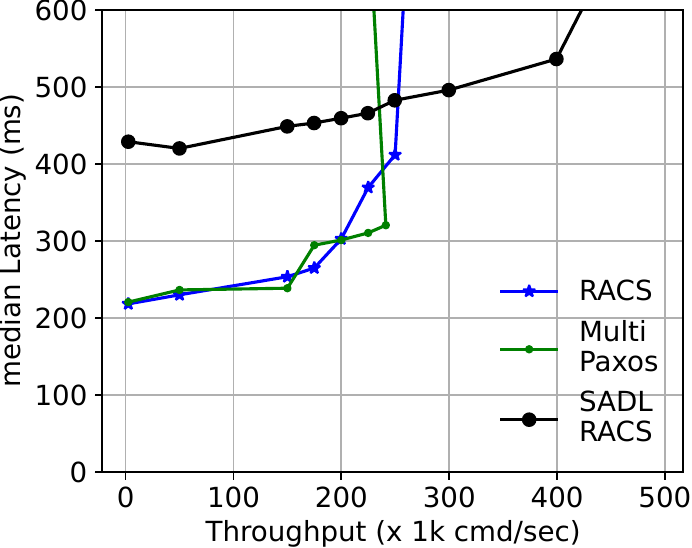}
        \caption{Median Latency (17B)}
        \label{fig:normal_case_performance_wan_median_17_payload}
        
    \end{subfigure}
    \begin{subfigure}[b]{0.24\linewidth}
        \centering
        \includegraphics[width=0.9\textwidth]{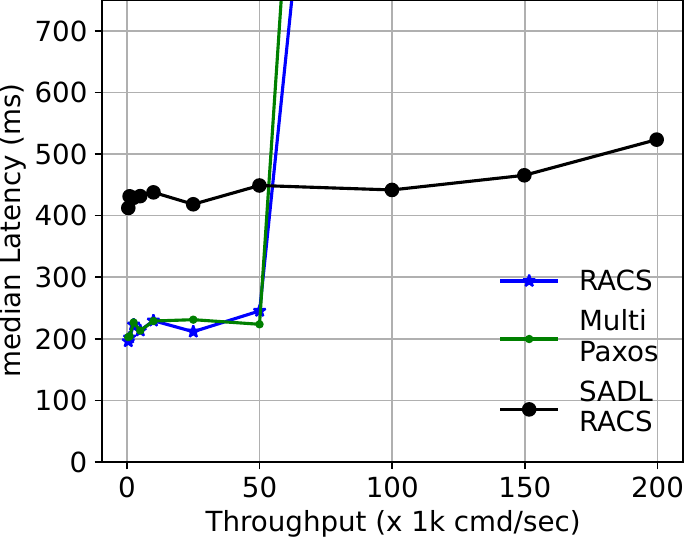}
        \caption{Median Latency (73B)}
        \label{fig:normal_case_performance_wan_median_73_payload}
        
    \end{subfigure}
    \begin{subfigure}[b]{0.24\linewidth}
        \centering
        \includegraphics[width=0.9\textwidth]{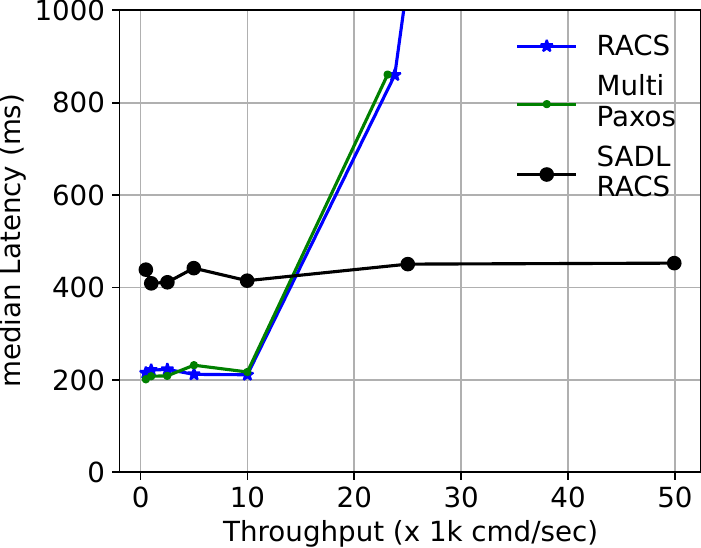}
        \caption{Median Latency (265B)}
        \label{fig:normal_case_performance_wan_median_265_payload}
        
    \end{subfigure}
    \begin{subfigure}[b]{0.24\linewidth}
        \centering
        \includegraphics[width=0.9\textwidth]{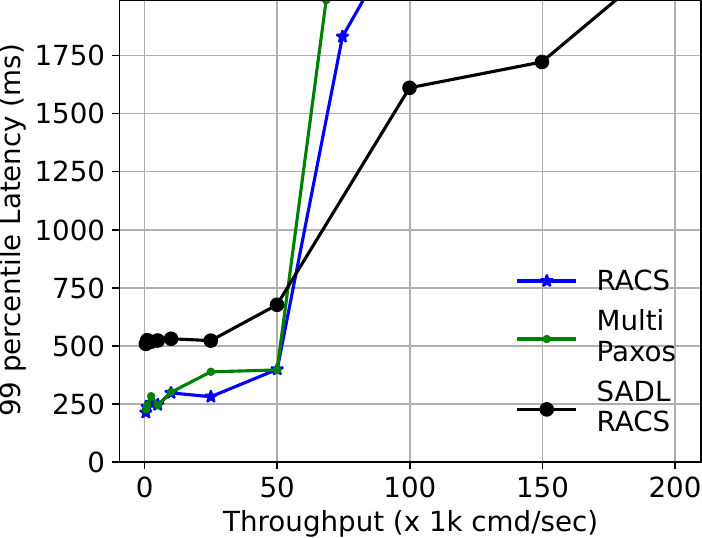}
        \caption{99\% Latency (73B)}
        \label{fig:normal_case_performance_wan_tail_73_payload}
    \end{subfigure}
    \caption{Throughput versus latency for WAN normal-case execution using 17B, 73B and 265B command sizes.}
    \label{fig:normal_case_wan_message_size_performance}
    \vspace{-3mm}    
\end{figure*}

\subsection{Scalability of \mempool}
\label{subsec:eval-scalability}

% \begin{figure}[t]
%     \centering
%     \includegraphics[scale=0.4]{}
%     \caption{WAN scalability with Redis backend}
%     \label{fig:scalability}
% \end{figure}

In this experiment, we aim to quantify the scalability of \mempool-\sys.
We consider two factors of scalability:
(1) scalability w.r.t increasing replication factor 
and (2) scalability w.r.t increasing payload size.

Although we conducted experiments with both Multi-Paxos and Raft, 
we observed the same throughput and latency behavior due to their identical protocol structure during normal-case replication 
(both replicate in a single network round trip, serialized through a leader). 
Therefore, for clarity in figures and explanation, we present only the Multi-Paxos results.

\textbf{Scalability w.r.t increasing replication factor}: In this experiment,
we evaluate the scalability of \mempool-\sys 
by running it with an ensemble of three (minimum replication allowed),
five (common replication factor) and eleven replicas (improved robustness to concurrent
replica failures), located in geographically 
separated AWS regions.
\cref{fig:normal_case_wan_performance_replicas} compare the scalability of \mempool-\sys against pipelined \sys, pipelined Multi-Paxos, and pipelined EPaxos, for different replication factors.

\textbf{\sys vs \{Multi-Paxos and Raft\}}: We observe that the saturation 
throughput of Multi-Paxos and \sys decreases from 230k 
to 130k cmd/sec (under 600ms median latency) when the replication factor is increased from 
3--11.
With increasing replica count, 
the leader replica in \sys and Multi-Paxos has to
send and receive more messages, due to increased
quorum sizes, hence the 
performance is bottlenecked by the leader's bandwidth capacity.

\textbf{\sys vs \mempool-\sys}: We observe that \mempool-\sys
provides a throughput of 380k cmd/sec (under 600 ms median latency),
when the replication factor is 11.
\mempool-\sys outperforms pipelined-\sys and pipelined Multi-Paxos by 
%(380-130)/130%
192\% in the 11 replica scenario.
This confirms that \mempool enables \sys to increase throughput by avoiding the leader bottleneck through the decoupling of command dissemination from consensus.
Hence we prove the claim \textbf{C4.1}.

\textbf{Scalability w.r.t increasing payload size}:
In this experiment, we evaluate the impact of payload 
size for the \mempool-\sys performance.
We experiment with 3 key sizes:
8B, 64B, and 256B, 
used in recent SMR work~\cite{alimadadi2023waverunner}.
Combined with 1B opcode and
8B value, these key sizes result in
17B, 73B, and 265B command sizes.
We deploy \mempool-\sys, pipelined Paxos and pipelined-\sys
in a WAN setting with 5 replicas and 5 clients.
\cref{fig:normal_case_wan_message_size_performance} depicts 
the results.

We observe that for each command size, the 
saturation throughput of \mempool-\sys is 
at least 2 times the throughput of \sys and Multi-Paxos.
With increasing command size, the leader replica's
bandwidth of \sys (and Multi-Paxos)
becomes the bottleneck.
In contrast,
thanks to the decoupling of command dissemination
from consensus, \mempool-\sys evenly distributes the 
bandwidth overhead among all the replicas,
and sustains higher throughput.
This proves the claim \textbf{C4.2}.
\subsection{\sys LAN Normal Case Performance}\label{sec:eval-sys-lan-performance}

\begin{figure}
    \centering
    \includegraphics[width=0.3\textwidth]{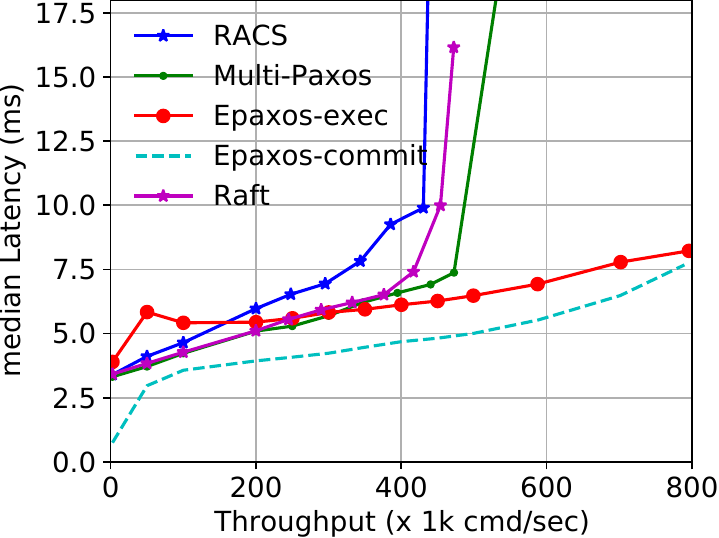}    
    \caption{Throughput versus latency for LAN normal-case.}
    \label{fig:normal_case_lan_performance}
    \vspace{-5mm}
\end{figure}

In this experiment, we quantify the performance of the \sys in a LAN setting.
We conduct experiments with 5 replicas and 5 clients deployed in the same AWS region. \Cref{fig:normal_case_lan_performance} illustrates the LAN performance of \sys.

\textbf{\sys vs \{Multi-Paxos and Raft\}}:
We observe that \sys achieves a saturation throughput of
420k cmd/sec, under a median
latency upper bound of 10ms,
that is comparable to the saturation throughput of Multi-Paxos (450k) and
Raft (440k).
Under normal case executions,
\sys, Multi-Paxos, and Raft have 1 round-trip
latency, serialized through a leader,
hence provide comparable performance.

% \textbf{\sys vs EPaxos}:
% Second, we observe that Epaxos-commit outperforms 
% \sys, both in terms of latency and throughput.
% The EPaxos-commit experiment employs a conflict rate of 
% 2\%\cite{tollman2021epaxos}
% hence 98\% of the time, commands are 
% committed in one round trip, without 
% serializing through a leader replica.
% In contrast, \sys builds a total order of commands,
% serialized using a leader-replica,
% hence naturally the \sys performance is bottlenecked by the leader 
% replica's capacity.

\section{Conclusion, Limitations and Future Work}\label{sec:conc}

We presented \sys, the first  randomized consensus protocol that simultaneously achieves three key goals; 
(1) robustness under adversarial network conditions, 
(2) performance comparable to existing leader-based protocols, under synchronous network conditions, and
(3) simple design inspired by Raft, enabling seamless integration.

\mempool-\sys has limitations that require future work. 
\mempool-\sys requires at least four message hops per commit, making it less suitable for low-latency, high-throughput applications. 
We propose a DAG-based redesign that merges dissemination and ordering as a plausible method to achieve both low latency and high throughput. 

While \mempool-\sys optimizes for performance, robustness, and simplicity, it does not address energy efficiency—its chain-based design commits blocks even at low request rates, leading to high energy use. 
Although energy-efficient cloud systems are well studied, consensus protocol energy use remains unexplored, opening future directions in defining energy metrics, analyzing existing protocols, and designing energy-efficient consensus protocols. 

Finally, \sys assumes that a majority of replicas are connected, however, loses liveness under extreme network partitions with only one or two connected replicas. 
Extending \sys to support such network partitions is left for future work.

\subsubsection*{Acknowledgements.} This work was supported by (1) Mysten Labs and the Sui Foundation, (2) the Austrian Science Fund (FWF) through the SFB SpyCode project F8512-N and by the WWTF through the project 10.47379/ICT22045, (3) EPFL research grants, (4) AXA Research Fund, and (5) Etherium grant.

\bibliographystyle{IEEEtran}
\bibliography{references}
\newpage

\appendix
\section{\sys Formal Proofs}\label{sec:formal-proof}

\subsection{Definition} 
\textbf{elected-fallback block}: 
We refer to an fallback block $B_f$ generated in view
$v$ with level $2$ as an elected-fallback block,
if the common-coin-flip($v$) returns the index of
the proposer $p_l$ who generated $B_f$ in the
view $v$ and if the $<$asynchronous-complete$>$
for $B_f$ exists in the first $n-f$
$<$asynchronous-complete$>$ messages received.

\subsection{Proof of agreement}

\begin{theorem}\label{th:1}
    Let $B$ and $\tilde{B}$ be two blocks with rank $(v,r)$. Each of $B$ and $\tilde{B}$ can be of type: (1) synchronous block which collects at least $n-f$ votes or (2) elected-fallback block or (3) level $1$ fallback block which is a parent of an elected-fallback block. Then $\tilde{B}$ and $B$ are the same.
\end{theorem}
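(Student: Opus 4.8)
The plan is to prove Theorem~\ref{th:1} by an exhaustive case analysis on the types of $B$ and $\tilde{B}$. Since each of the two blocks falls into one of the three listed types, there are, up to symmetry, six cases: the three in which $B$ and $\tilde{B}$ have the same type, and the three in which they have different types. The strategy is that the same-type cases are settled directly by Lemmas~\ref{lem:1}, \ref{lem:2}, \ref{lem:3}, while the mixed-type cases are \emph{vacuous} because the corresponding pair of block types cannot coexist at the same rank, as shown in Lemmas~\ref{lem:4}, \ref{lem:5}, \ref{lem:6}.

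First I would dispatch the three same-type cases. If $B$ and $\tilde{B}$ are both synchronous blocks collecting at least $n-f$ votes, then Lemma~\ref{lem:1} applies directly (its hypothesis is in fact weaker, requiring only that both be synchronous blocks of rank $(v,r)$), giving $B=\tilde{B}$. If both are elected-asynchronous blocks, this is exactly the hypothesis of Lemma~\ref{lem:2}. If both are level~$1$ asynchronous blocks that are parents of an elected-asynchronous block, then since both already have rank $(v,r)$ the ``same view $v$'' condition required by Lemma~\ref{lem:3} holds automatically, and Lemma~\ref{lem:3} yields $B=\tilde{B}$.

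Next I would handle the three mixed-type cases by observing that each such configuration is impossible, so the conclusion holds vacuously: a synchronous block with $n-f$ votes together with a level~$1$ asynchronous block that is a parent of an elected-asynchronous block cannot both carry rank $(v,r)$, by Lemma~\ref{lem:4}; a synchronous block with $n-f$ votes together with a level~$2$ elected-asynchronous block cannot both carry rank $(v,r)$, by Lemma~\ref{lem:5}; and a level~$1$ asynchronous parent-of-elected block together with a level~$2$ elected-asynchronous block cannot both carry rank $(v,r)$, by Lemma~\ref{lem:6}. In each of these three cases the premise of the theorem is unsatisfiable, so there is nothing to prove.

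The only point requiring care — and the nearest thing to an obstacle — is checking that the case split is genuinely exhaustive and non-overlapping: the three types are pairwise disjoint as block descriptors, since synchronous blocks carry level $-1$, elected-asynchronous blocks carry level $2$, and the third type carries level $1$, so no block is counted under two types at once. Granting that, every one of the six cases either concludes $B=\tilde{B}$ or is ruled out, which establishes the theorem. The conceptual heavy lifting, namely the quorum-intersection arguments that use $n-f>\tfrac{n}{2}$, has already been carried out in Lemmas~\ref{lem:1}--\ref{lem:6}; the present theorem is essentially their assembly into a single per-rank uniqueness statement, which then feeds the safety argument that all later blocks extend a committed block.
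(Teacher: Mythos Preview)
Your proposal is correct and takes essentially the same approach as the paper, which simply states that the theorem ``holds directly from Lemma~\ref{lem:1},~\ref{lem:2},~\ref{lem:3},~\ref{lem:4},~\ref{lem:5} and~\ref{lem:6}.'' You have merely made explicit the six-way case split that the paper leaves implicit, correctly identifying that Lemmas~\ref{lem:1}--\ref{lem:3} handle the three same-type pairings while Lemmas~\ref{lem:4}--\ref{lem:6} render the three mixed-type pairings vacuous.
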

\begin{proof}
    This holds directly from the block formation -- if both $B$ and $\tilde{B}$ has the same rank, then due to quorum intersection, there exists at least one node who voted for both blocks in the same rank, which is a contradiction to our assumption of non malicious nodes.
\end{proof}

\begin{theorem}\label{th:2}
    Let $B$ and $\tilde{B}$ be two adjacent blocks, then $\tilde{B}.r = B.r+1$ and $\tilde{B}.v \geq B.v$.
\end{theorem}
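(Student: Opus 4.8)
The plan is to read both inequalities directly off the places where blocks are \emph{constructed}, and then reduce everything to a single bookkeeping invariant that I prove by induction over the execution. First I would fix what ``adjacent'' means: $B$ is the parent of $\tilde B$, i.e. the parent link of $\tilde B$ points to $B$. A parent link is written exactly once, when the block is created, and in \cref{algo:merged} blocks are created in only three places --- line~\ref{S:line:36} (a synchronous block: parent $block_{high}$, round $r_{cur}{+}1$, view $v_{cur}$), line~\ref{A:line:6} (a level-$1$ asynchronous block: parent $block_{high}$, round $r_{cur}{+}1$, view $v_{cur}$), and line~\ref{A:line:18} (a level-$2$ asynchronous block: parent the level-$1$ block $B$ that just collected $n-f$ \texttt{<vote-async>}s, round $B.r{+}1$, view $v_{cur}$). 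So it suffices to check, at each site, that the parent's round equals the quantity that gets incremented and that the parent's view does not exceed the view stamped on the child.

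The level-$2$ case (line~\ref{A:line:18}) is immediate: the round is $B.r+1$ by construction, and the guards at lines~\ref{A:line:15} and~\ref{A:line:8} force $B.v = v_{cur}$, so child and parent share a view. The other two sites both reduce to the invariant $(I)$: \emph{whenever a replica reaches a block-creation step, its local $block_{high}$ satisfies $block_{high}.r = r_{cur}$ and $block_{high}.v \le v_{cur}$}. Given $(I)$, line~\ref{S:line:36} yields $\tilde B.r = r_{cur}+1 = B.r+1$ and $\tilde B.v = v_{cur} \ge B.v$, and line~\ref{A:line:6} yields the same once the update $r_{cur}\leftarrow\max(r_{cur},block_{high}.r)$ collapses to $block_{high}.r$. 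I would prove $(I)$ by induction over the execution, inspecting every handler that writes $v_{cur}$, $r_{cur}$, or $block_{high}$: the genesis state satisfies it; the propose/vote handlers (lines~\ref{S:line:22}--\ref{S:line:23}) set $r_{cur},v_{cur}\leftarrow B.r,B.v$ and then $block_{high}\leftarrow B$; the self-proposal step re-establishes $(I)$ immediately after incrementing; and the asynchronous-exit block (lines~\ref{A:line:27}--\ref{A:line:33}) sets $(v_{cur},r_{cur})\leftarrow rank(block_{high})$ before the single increment $v_{cur}\leftarrow v_{cur}+1$, keeping $r_{cur}=block_{high}.r$ and $v_{cur}>block_{high}.v$.

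The step carrying the real weight --- and the main obstacle --- is the entry into asynchronous mode (lines~\ref{A:line:1}--\ref{A:line:6}) together with the new-view handler, because there $block_{high}$ is overwritten by $argmaxrank(\mathcal S)$ for an $(n-f)$-quorum $\mathcal S$, so I must argue that the round carried by $block_{high}$ stays in lockstep with $r_{cur}$ across this re-synchronisation. The plan is to maintain as part of the induction that $block_{high}$ always has maximal rank among the ``stable'' blocks the replica has witnessed, so $argmaxrank(\mathcal S).r \ge r_{cur}$ (making the $\max$ collapse as required) while $argmaxrank(\mathcal S).v \le v_{cur}$ after the assignment $v_{cur}\leftarrow v$, using that every block inside a $\langle\mathrm{timeout},v,\cdot,\cdot\rangle$ (or $\langle\mathrm{new\text{-}view},v,\cdot\rangle$) message already has view $\le v$ by $(I)$ at its sender and that $v\ge v_{cur}$ is a guard. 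Justifying ``$argmaxrank(\mathcal S).r\ge r_{cur}$'' is exactly where quorum intersection enters, in the same spirit as \cref{lem:4}--\cref{lem:6} and \cref{th:1}: any round a correct replica has locally reached was reached through a block witnessed by a majority, hence represented in every $(n-f)$-quorum, so the recomputed $block_{high}$ never regresses below $r_{cur}$. Once $(I)$ is secured at all three creation sites, \cref{th:2} follows by the construction-site case analysis above.
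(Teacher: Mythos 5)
Your skeleton is the same as the paper's: its proof of \cref{th:2} is precisely a case analysis over the three block-creation sites (lines~\ref{S:line:36}, \ref{A:line:6} and \ref{A:line:18}), asserting at each site that the child's round is the parent's round plus one and that views never decrease. What you add is the explicit bookkeeping invariant $(I)$ that $block_{high}.r = r_{cur}$ and $block_{high}.v \le v_{cur}$ at every creation site; the paper never states or proves this, it simply identifies the replica's current rank with the rank of $block_{high}$ (e.g.\ ``the replica selects the $block_{high}$ with the highest rank $(v,r)$ and extends it by proposing a level $1$ asynchronous block with round $r+1$''). Making $(I)$ explicit is a reasonable sharpening, and your handling of the level-$2$ site and of the plain propose/vote handlers is fine.

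The step you yourself flag as carrying the real weight is, however, where your argument fails. You justify $argmaxrank(\mathcal{S}).r \ge r_{cur}$ by claiming that any round a correct replica has locally reached ``was reached through a block witnessed by a majority, hence represented in every $(n-f)$-quorum.'' That is not true in this protocol: a replica advances $r_{cur}$ (and $block_{high}$) the moment it receives and votes on a single propose message (lines~\ref{S:line:20}--\ref{S:line:25}); no quorum is involved, and under asynchrony that proposal may have reached only this one replica. Majority-witnessing holds only for blocks that actually collect $n-f$ votes, i.e.\ committed blocks, which is exactly the hypothesis under which \cref{lem:4,lem:5,lem:6} and \cref{th:3} invoke quorum intersection --- so the analogy does not transfer to an uncommitted $block_{high}$. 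Concretely, if only replica $p_i$ received the leader's latest proposal with rank $(v,r{+}1)$ and then acts on a timeout quorum $\mathcal{S}$ whose messages all carry round-$r$ blocks, the handler at line~\ref{A:line:1} overwrites $block_{high}$ with a round-$r$ block while $\max(r_{cur}, block_{high}.r) = r{+}1$, and the level-$1$ block formed at line~\ref{A:line:6} has round $r{+}2$ over a round-$r$ parent, so $(I)$ is broken exactly where you need it. To repair the induction you would have to argue instead that the set $\mathcal{S}$ (timeout or new-view) effectively includes the replica's own highest-ranked block, or otherwise weaken the invariant at these two re-synchronisation handlers --- which is what the paper's proof implicitly assumes rather than proves; quorum intersection is the wrong tool here.
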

\begin{proof}
    According to the algorithm, there are three instances where a new block is created.

\begin{itemize}
    \item Case 1: when $isAsync$ = false and $L_{vcur}$ creates a new
    synchronous block by extending the $block_{high}$ with rank ($v$,$r$). In this case, $L_{vcur}$ creates a new block with round $r+1$. Hence the adjacent blocks have monotonically increasing round numbers.
    \item Case 2: when $isAsync$ = true and upon collecting $n-f$ $<$timeout$>$ messages in view $v$. In this case, the replica selects the $block_{high}$ with the highest rank $(v,r)$, and extends it by proposing a level $1$ fallback block with round $r+1$. Hence the adjacent blocks have monotonically increasing round numbers.
    \item Case 3: when $isAsync$ = true and upon collecting $n-f$ $<$vote-async$>$ messages for a level $1$ fallback block. In this case, the replica extends the level $1$ block by proposing a level $2$ block with round $r+1$. Hence the adjacent blocks have monotonically increasing round numbers.
\end{itemize}

The view numbers are non decreasing according to the algorithm. Hence Theorem~\ref{th:2} holds.
\end{proof}

\begin{theorem}\label{th:3}
    If a synchronous block $B_c$ with rank $(v,r)$ is committed, then all future blocks
    in view $v$ will extend $B_c$.
\end{theorem}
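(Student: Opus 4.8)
The plan is to reduce the statement to a purely structural fact about the block chain: for every block $B'$ with view $v$ and round $r' > r$, the unique ancestor of $B'$ at round $r$ is $B_c$ itself; once this is shown, $B'$ extends $B_c$ by the definition of the parent chain. The argument then has three parts. First I would invoke Theorem~\ref{th:2}: since adjacent blocks have consecutive round numbers and non-decreasing view numbers, walking down the parent links of $B'$ gives exactly one ancestor at each round from $r'$ down to $r$, with view numbers non-decreasing toward $B'$; let $\hat B$ denote the ancestor of round $r$, so $\hat B.v \le v$. Second I would argue that $\hat B.v = v$, i.e.\ $\mathrm{rank}(\hat B) = (v,r)$. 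Because $B_c$ is a synchronous block of view $v$, it was created by $L_v$ (line~\ref{S:line:36}), so $L_v$ proposes at round $r$; and since all synchronous blocks of view $v$ form the single chain that $L_v$ grows from the $block_{high}$ it adopts when view $v$ starts (whose rank has view strictly below $v$), every block of view $< v$ reachable from $B'$ has round strictly below $r$. Hence $\hat B$, having round $r$, has view exactly $v$.

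The third and principal part is to show $\hat B = B_c$, which I would do by casing on the kind of block $\hat B$ is, in the spirit of Theorem~\ref{th:1}. If $\hat B$ is synchronous, then $\hat B$ and $B_c$ are two synchronous blocks of rank $(v,r)$, so $\hat B = B_c$ by Lemma~\ref{lem:1}. If $\hat B$ is a level-$2$ asynchronous block, then $\hat B$ has a child on $B'$'s chain (since $r' > r$), but every block extending a level-$2$ asynchronous block has view $v+1$, since such a block can only be proposed after the coin flip has ended the asynchronous mode and incremented the view (line~\ref{A:line:33}); this contradicts that views are non-decreasing toward $B'$ and $B'.v = v$. If $\hat B$ is a level-$1$ asynchronous block, its child on $B'$'s chain is either proposed only after a view increment (again contradicting $B'.v = v$) or is a level-$2$ asynchronous block of rank $(v,r+1)$ built directly on $\hat B$; but the latter requires some replica to collect $n-f$ \textsf{vote-async} messages for $\hat B$ (lines~\ref{A:line:15}--\ref{A:line:18}), while $B_c$, being committed as a synchronous block, received $n-f$ \textsf{vote} messages (lines~\ref{S:line:9} and~\ref{S:line:12}), so there is a quorum $Q$ of its voters, each of which recorded rank $(v,r)$ at the moment of voting (line~\ref{S:line:22}); as per-replica recorded ranks never decrease, no member of $Q$ ever replies \textsf{vote-async} to a proposal of rank $\le (v,r)$ (the guard on line~\ref{A:line:9}), so by quorum intersection $\hat B$ cannot gather $n-f$ such replies, a contradiction. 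Hence $\hat B$ is synchronous and equals $B_c$, which finishes the proof.

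The part I expect to be the main obstacle is not the skeleton above but the supporting invariants it quietly uses, which I would establish by inspecting every state-update site in Algorithm~\ref{algo:merged}: (i) each replica's recorded rank $(v_{cur},r_{cur})$ is monotone non-decreasing and $block_{high}.v \le v_{cur}$ always; (ii) a replica sets $block_{high}$ to a level-$1$ or level-$2$ asynchronous block only at the point where it also increments $v_{cur}$ (this is what powers the ``view increment'' steps of the case analysis and, in particular, prevents an asynchronous block of round $r$ from ever appearing as a $block_{high}$ carried in a view-$v$ timeout); and (iii) the synchronous leader $L_v$'s $block_{high}$, once it extends $B_c$ from round $r$ on, never retreats while $L_v$ stays in view $v$ -- which in turn needs that the \textsf{new-view} handler does not re-fire for $L_v$ after $L_v$ has begun proposing in view $v$ (only $\le n$ such messages exist and $n-f$ are consumed on entering the view). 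Nailing down (i)--(iii), and reconciling them with the several places where $block_{high}$ and the recorded rank are overwritten with no explicit rank comparison, is where the real work lies; everything else follows the outline.
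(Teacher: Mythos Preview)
Your approach differs substantially from the paper's and is considerably more elaborate. The paper dispatches the theorem with a short minimality argument: assume some view-$v$ block $B_s$ fails to extend $B_c$; let $B_f$ be the first block on the path from genesis to $B_s$ that conflicts with $B_c$; since $L_v$ forms $B_f$ by extending its current $block_{high}$ (line~\ref{S:line:36}), and by minimality of $B_f$ that parent already extends $B_c$, so does $B_f$ --- contradiction. No case split on block types, no quorum intersection, and none of your invariants (i)--(iii). Your structural route --- walk down to the round-$r$ ancestor $\hat B$ and case on its type --- is sound and in fact more thorough: you explicitly treat asynchronous blocks of view $v$, whereas the paper's proof tacitly assumes $B_f$ is synchronous (it cites only line~\ref{S:line:36}) and pushes the asynchronous cases to Theorem~\ref{th:4}. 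One place to tighten: your Part~2 claim that $\hat B.v = v$ is justified only via the shape of $L_v$'s synchronous chain, which does not by itself cover the case where $B'$ is asynchronous; there you already need the quorum-intersection step (a $B_c$-voter lies in any set of $n-f$ view-$v$ timeout senders, forcing a level-$1$ block's parent to have view $v$) that you currently reserve for Part~3. With that folded in, your plan goes through, but the paper's argument gets there with far less scaffolding.
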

\begin{proof}
    We prove this by contradiction. 

    Assume there is a committed block $B_c$ with $B_c$.$r$ = $r_c$ (hence all the blocks in the path from the genesis block to $B_c$ are committed). Let block $B_s$ with $B_s$.$r$ = $r_s$ be the round $r_s$ block such that $B_s$ conflicts with $B_c$ ($B_s$ does not extend $B_c$). Without loss of generality, assume that $r_c < r_s$.

    Let block $B_f$ with $B_f$.$r$ = $r_f$ be the first valid block formed in a round $r_f$ such that $r_s \geq r_f > r_c$ and $B_f$ is the first block from the path from genesis block to $B_s$ that conflicts with $B_c$; for instance $B_f$ could be $B_s$. $L_{vcur}$ forms
    $B_f$ by extending its $block_{high}$.
    Due to the minimality of $B_f$ ($B_f$ is the first block that conflicts with $B_c$), $block_{high}$ contain either $B_c$ or a block that extends $B_c$. Since $block_{high}$ extends $B_c$, $B_f$ extends $B_c$, thus we reach a contradiction. Hence no such $B_f$ exists. Hence all the blocks created after $B_c$ in the view $v$ extend $B_c$.
\end{proof}

\begin{theorem}\label{th:4}
    If a synchronous block $B$ with rank $(v,r)$ is committed, an elected-fallback block $\tilde{B}$ of the same view $v$ will extend that block.
\end{theorem}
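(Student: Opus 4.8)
The plan is to argue by contradiction, in the same style as Theorem~\ref{th:3}. Suppose the synchronous block $B$ with rank $(v,r)$ is committed --- so $B$ collected $n-f$ \codeword{<vote>} messages (line~\ref{S:line:9}) --- yet there is an elected-asynchronous block $\tilde{B}$ of view $v$ that does not extend $B$. Since $\tilde{B}$ is an elected level $2$ block of view $v$, it has a level $1$ parent $B_{f1}$ of the same view (the new level $2$ block is formed by extending the level $1$ block at line~\ref{A:line:18}, and the level $1$ block passed the $B.v = v_{cur}$ check at line~\ref{A:line:8}); moreover, the replica that created $\tilde{B}$ must have collected $n-f$ \codeword{<vote-async>} messages for $B_{f1}$ (line~\ref{A:line:15}--\ref{A:line:16}). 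I will work with $B_{f1}$ and this second quorum rather than with $\tilde{B}$ directly.

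First I would pin down the round of $B_{f1}$. By quorum intersection ($n-f > n/2$), some replica $p_k$ is both one of the $n-f$ voters of $B$ and one of the $n-f$ replicas that sent \codeword{<vote-async>} for $B_{f1}$. Replica $p_k$ voted for $B$ while in the synchronous mode (the guard at line~\ref{S:line:20} requires $isAsync$ to be false) and voted-async for $B_{f1}$ while in the asynchronous mode (line~\ref{A:line:8}--\ref{A:line:10}); since the view number is monotone, and $p_k$ still has $v_{cur}=v$ when it votes-async for the view-$v$ block $B_{f1}$, $p_k$ never leaves view $v$ between these two steps, and its asynchronous-mode activity in view $v$ follows its synchronous-mode activity in view $v$ (the asynchronous episode of a view is unique because it bumps $v_{cur}$ at line~\ref{A:line:33}). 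Hence $p_k$ set $block_{high}=B$ and $(v_{cur},r_{cur})=(v,r)$ at line~\ref{S:line:22}--\ref{S:line:23} before entering the asynchronous mode; inside view $v$, $r_{cur}$ only increases (in the synchronous mode it is raised by higher-rank proposals, and entering the asynchronous mode sets $r_{cur}\leftarrow\max(r_{cur},\cdot)$ at lines~\ref{A:line:1}--\ref{A:line:7}). So when $p_k$ sends \codeword{<vote-async>} for $B_{f1}$ its rank is $\geq(v,r)$, and the guard at line~\ref{A:line:9} forces $\mathrm{rank}(B_{f1}) > (v,r)$, i.e.\ $B_{f1}.r \geq r+1$.

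Next I would show $B_{f1}$, and hence $\tilde{B}$, extends $B$ --- the desired contradiction. The block $B_{f1}$ was created by extending $H = argmaxrank(\mathcal{S})$, the highest-ranked $block_{high}$ reported among the $n-f$ \codeword{<timeout>} messages of view $v$ that triggered the asynchronous episode (lines~\ref{A:line:1}--\ref{A:line:6}). By Theorem~\ref{th:2}, $H.r = B_{f1}.r - 1 \geq r$. A \codeword{<timeout,$v$,$\cdot$,$\cdot$>} message is only broadcast while its sender is in the synchronous mode of view $v$ and before the asynchronous episode of view $v$ begins, so $H$ is some replica's synchronous-mode $block_{high}$ with view $\leq v$; combined with $H.r \geq r$ and the fact that (by Theorem~\ref{th:2}) every block of a view strictly below $v$ has round strictly below $r$, this forces $H$ to be a synchronous block of view $v$ with round $\geq r$. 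If $H.r = r$, then $H = B$ by Lemma~\ref{lem:1}; if $H.r > r$, then $H$ extends $B$ by Theorem~\ref{th:3}. In either case $B_{f1}$ extends $B$, so $\tilde{B}$ extends $B$, contradicting the assumption.

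I expect the bookkeeping in the middle to be the main obstacle: turning the informal mode-switching picture into precise invariants about $v_{cur}$, $r_{cur}$, $block_{high}$, and $isAsync$ --- specifically, that the intersecting replica $p_k$ performs its synchronous-mode vote for $B$ strictly before its asynchronous-mode vote for $B_{f1}$, that $p_k$ never leaves view $v$ in between so that its rank stays $\geq (v,r)$, that a view has at most one asynchronous episode, and that the ``carried'' block $H$ cannot be an asynchronous block of view $v$ (it is reported in a view-$v$ timeout, which precedes the creation of any view-$v$ asynchronous block). The round argument of the second paragraph is what rescues the proof from the possibility that $\mathcal{S}$ contains ``stale'' timeouts whose $block_{high}$ predates $B$: any such stale entry has round below $r$ and therefore cannot be the $argmaxrank$.
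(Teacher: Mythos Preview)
Your route differs substantially from the paper's. The paper intersects the $n-f$ replicas that voted for $B$ --- and therefore set $block_{high}$ to $B$ or, by Theorem~\ref{th:3}, to a block extending $B$ --- directly with the $n-f$ \codeword{<timeout>} messages from which the level-$1$ parent is built (line~\ref{A:line:1}). The intersecting replica's timeout then carries a $block_{high}$ of rank at least $(v,r)$, so the $argmaxrank$ also has rank at least $(v,r)$ and (being a synchronous block of view $v$) extends $B$; that is the entire argument. You instead intersect the vote quorum for $B$ with the \codeword{<vote-async>} quorum for $B_{f1}$, which buys you only the round bound $B_{f1}.r\ge r+1$, and then you must separately argue that $H$ has view exactly $v$.

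That separate step has a real gap. The assertion ``by Theorem~\ref{th:2} every block of a view strictly below $v$ has round strictly below $r$'' is not a consequence of Theorem~\ref{th:2}: that theorem constrains adjacent blocks on a single chain (rounds increment by one, views are non-decreasing) and says nothing across branches. The leader of an earlier view may well have proposed blocks reaching arbitrarily high rounds without their ever being committed, and such a block can appear as some replica's $block_{high}$ in a view-$v$ \codeword{<timeout>}. So $H.r\ge r$ together with $H.v\le v$ does not force $H.v=v$; a block of rank $(v{-}1,r')$ with $r'\ge r$ survives your argument and need not extend $B$. What the paper's timeout-quorum intersection delivers --- and your vote-async detour loses --- is precisely the lexicographic bound $\mathrm{rank}(H)\ge(v,r)$, from which $H.v\ge v$ and hence $H.v=v$ follows immediately.
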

\begin{proof}
    We prove this by contradiction. Assume that a synchronous block $B$ is committed in view $v$ and an elected-fallback block $\tilde{B}$ does not extend $B$. Then, the parent level $1$ block of $\tilde{B}$, $\tilde{B_p}$, also does not extend $B$.

    To form the level $1$ $\tilde{B_p}$, the replica collects $n-f$ $<$timeout$>$ messages, each of them containing the $block_{high}$. If $B$ is committed, by theorem~\ref{th:3}, at least $n-f$ replicas should have set (and possibly sent) $B$ or a block extending $B$ as the $block_{high}$. Hence by intersection of the quorums $\tilde{B_p}$ extends $B$, thus we reach a contradiction.
\end{proof}

\begin{theorem}\label{th:5}
    At most one level 2 fallback block from one proposer can be committed in a given view change.
\end{theorem}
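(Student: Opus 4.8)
The plan is to locate the single rule in \cref{algo:merged} by which a level $2$ asynchronous block ever becomes committed, and then exploit (i) the uniqueness of the common-coin output and (ii) the non-equivocation assumption of the threat model. Concretely, I would first recall from the definition of an \emph{elected-asynchronous block} and from line~\ref{A:line:24}--\ref{A:line:27} that, during the asynchronous phase of a view $v$, a replica sets a level $2$ block $B$ as $block_{commit}$ \emph{only} when $B$ is the level $2$ block whose proposer index equals $\text{common-coin-flip}(v)$ and whose $<$asynchronous-complete$>$ message lies among the first $n-f$ such messages that replica received; no other code path commits a level $2$ block. Hence every committed level $2$ block in this view change is an elected-asynchronous block.

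Next I would invoke property~(1) of the common coin: every replica that reaches line~\ref{A:line:25} in view $v$ computes the same index, say $l$. Therefore only level $2$ blocks proposed by the single replica $p_l$ are eligible to be committed in this view change. It then remains to show $p_l$ contributes at most one level $2$ block. By line~\ref{A:line:15}, a replica forms and broadcasts a level $2$ block (lines~\ref{A:line:18}--\ref{A:line:19}) only \textbf{upon first} collecting $n-f$ $<$vote-async$>$ messages for its own level $1$ block in the current view; the \textbf{first} guard fires this rule at most once per view, and since $p_l$ does not equivocate it cannot emit a second, distinct level $2$ proposal. Combining the two observations, at most the one level $2$ asynchronous block originating from $p_l$ is ever committed in the view change, which is exactly the claim.

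The main obstacle I anticipate is ruling out the ``catch-up'' path (a replica building a level $2$ block on top of another replica's level $1$ block) as a source of alternative level $2$ blocks attributable to $p_l$: one must observe that the level $2$ proposal in line~\ref{A:line:19} always carries the \emph{broadcasting} replica's own identifier, so level $2$ blocks built by $p_j \neq p_l$ on $p_l$'s level $1$ block count as blocks ``from $p_j$'', and moreover that $p_l$ only ever receives $<$vote-async$>$ messages for blocks it itself proposed (line~\ref{A:line:10} directs each vote to the block's proposer), so the trigger at line~\ref{A:line:15} for $p_l$ concerns only $p_l$'s own level $1$ block. As a sanity check, the statement also follows directly from Lemma~\ref{lem:2} together with Theorem~\ref{th:2}: two committed level $2$ blocks in the same view would both be elected-asynchronous and, being adjacent to their level $1$ parents of the same view, would share a rank $(v,r)$, hence coincide by Lemma~\ref{lem:2}.
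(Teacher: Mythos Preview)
Your main argument is correct and mirrors the paper's proof exactly: both invoke (i) the agreement property of \texttt{common-coin-flip}$(v)$ to pin all committed level~$2$ blocks to a single proposer $p_l$, and then (ii) non-equivocation (you additionally cite the \textbf{upon first} guard at line~\ref{A:line:15}) to conclude $p_l$ emits at most one such block. One caveat: your closing ``sanity check'' via Lemma~\ref{lem:2} does not quite go through as stated, since two elected-asynchronous blocks in the same view need not share the same round $r$ (different replicas may build their level~$1$ blocks on $block_{high}$ values of different rounds), so Lemma~\ref{lem:2} alone would not force them to coincide---but this is an aside and your primary argument stands on its own.
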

\begin{proof}
    Assume by way of contradiction that 2 level $2$ fallback blocks from two different proposers are committed in the same view. A level $2$ fallback block $B$ is committed in the fallback phase if the common-coin-flip($v$) returns the proposer of $B$ as the elected proposer. Since the common-coin-flip($v$) outputs the same elected proposer across different replicas, this is a contradiction. Thus all level $2$ fallback blocks committed during the same view are from the same proposer.
    
    Assume now that the same proposer proposed two different level $2$ fallback blocks. Since no replica can equivocate, this is absurd.
    
    Thus at most one level $2$ fallback block from one proposer can be committed in a given view change.
\end{proof}

\begin{theorem}\label{th:6}
    Let $B$ be a level $2$ elected-fallback block that is committed, then all blocks proposed in the subsequent rounds extend $B$.
\end{theorem}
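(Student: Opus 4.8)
The plan is to mirror the contradiction-style arguments of Theorems~\ref{th:3} and~\ref{th:4}, but carried across the asynchronous-to-synchronous handoff. First I would unpack what the hypothesis buys us. Write $B$ as the level~$2$ block of some proposer $p_l$ in view $v$, with $\mathrm{rank}(B)=(v,r)$. For $B$ to be committed, $p_l$ must have collected $n-f$ \textless vote-async, $B$, $2$\textgreater\ messages (line~\ref{A:line:15},~\ref{A:line:22}), so there is a quorum $Q$ of $n-f$ replicas that each voted for $B$ at level~$2$ and, by line~\ref{A:line:8}--\ref{A:line:10}, set $B_{fall}[p_l]\leftarrow B$ (and, by line~\ref{A:line:9}, had current rank strictly below $(v,r)$ at that moment). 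I would then observe that every replica in $Q$ that completes view $v$'s asynchronous phase ends it with $block_{high}=B$: either $B$'s \textless asynchronous-complete\textgreater\ is among the first $n-f$ it collects, in which case it commits $B$ and sets $block_{high}\leftarrow B$ (line~\ref{A:line:24},~\ref{A:line:27}), or it takes the fallback branch, where $B_{fall}[p_l]=B\neq\mathrm{null}$ forces $block_{high}\leftarrow B$ (line~\ref{A:line:29}). Hence every replica in $Q$ that reaches line~\ref{A:line:35} sends \textless new-view, $v{+}1$, $B$\textgreater\ (or a descendant of $B$) to the next synchronous leader $L_{v+1}$.

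Next I would push this through the handoff. $L_{v+1}$ adopts $block_{high}\leftarrow argmaxrank(\mathcal{S})$ over a set $\mathcal{S}$ of $n-f$ \textless new-view\textgreater\ messages. Since the senders of $\mathcal{S}$ and $Q$ are both $(n-f)$-quorums, they intersect in at least $n-2f\geq 1$ replicas, so $\mathcal{S}$ contains at least one message carrying $B$; therefore the block $L_{v+1}$ adopts has rank at least $(v,r)$. I then need to upgrade ``rank $\geq(v,r)$'' to ``extends $B$'': at rank exactly $(v,r)$ this is Theorem~\ref{th:1} (the unique block of the relevant kinds is $B$ itself); for strictly higher rank I would argue that the only blocks a correct replica can still be advertising as $block_{high}$ at this point are $B$ or a descendant of $B$, or a synchronous block of view $v$ whose round is bounded by $p_l$'s timeout-quorum maximum — which, since $B$'s round equals that maximum plus two and $B$ extends every committed synchronous block of view $v$ by Theorem~\ref{th:4}, cannot outrank $B$. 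Consequently $L_{v+1}$'s first proposal has rank $(v{+}1,r{+}1)$ and extends $B$.

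The remainder is an induction on the view number. Once $L_{v+1}$'s first block extends $B$, a Theorem~\ref{th:3}-style minimality argument shows every block created in view $v{+}1$ extends $B$; if view $v{+}1$ itself ends with its own asynchronous phase, a Theorem~\ref{th:4}-style quorum-intersection argument carries the invariant across that transition; iterating over views and invoking Theorem~\ref{th:2} (rounds strictly increase, views non-decreasing) yields that every block proposed in a round $>r$ extends $B$, which is the claim.

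I expect the handoff step to be the main obstacle — specifically, ruling out that $argmaxrank$ at $L_{v+1}$ selects a stale block of rank strictly above $(v,r)$, such as a late synchronous block of view $v$ that $L_v$ proposed but that was never committed, or another proposer's level~$1$ block that parented a competing (non-elected) level~$2$ candidate. Pinning this down cleanly amounts to proving the asynchronous analogues of Lemmas~\ref{lem:4}--\ref{lem:6} for the \emph{entire} view-$v$ asynchronous phase: that no block witnessed by $n-f$ replicas (as $B$ is, via $Q$) can coexist at rank $\geq(v,r)$ with a different block that some correct replica still advertises in a \textless new-view\textgreater\ message. I would discharge it with the same counting used throughout Section~\ref{sec:formal-proof} — at least $n-f$ replicas saw $B$ before they could have adopted any conflicting block, and $n-f>n/2$ — adapted to the $block_{high}$ values reachable at the moment line~\ref{A:line:35} fires.
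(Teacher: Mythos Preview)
Your plan and the paper's proof share the same core: both argue that the $n-f$ replicas who sent \textless vote-async\textgreater\ for $B$ at level~$2$ carry $B$ forward as $block_{high}$, then use quorum intersection at the next \textless new-view\textgreater\ (or \textless timeout\textgreater) collection to force subsequent blocks to extend $B$. The decomposition differs. The paper runs a single global minimality argument---take the first block $\tilde{B}$ of rank $>(v,r)$ that fails to extend $B$, case on whether $\tilde{B}$ is a synchronous block of view $v{+}1$ or a level~$1$ asynchronous block of some later view, and derive a contradiction from quorum intersection in each case---whereas you work forward: establish the $Q$-quorum, push it through the handoff to $L_{v{+}1}$, then induct across views. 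Your version is more explicit in one place the paper glosses over: you invoke the $B_{fall}[p_l]$ fallback (lines~\ref{A:line:29}--\ref{A:line:31}) to show $Q$-replicas adopt $B$ even when they do not commit it, whereas the paper simply asserts ``a majority will set $B$ as $block_{high}$'' citing only line~\ref{A:line:15}. The obstacle you flag---upgrading ``$argmaxrank$ has rank $\geq(v,r)$'' to ``extends $B$''---is equally hand-waved in the paper; your Lemma~\ref{lem:4}--\ref{lem:6}-style counting plan to close it is the right instinct, though note the round bound on a stale synchronous block is tied to the \emph{advertising replica's} own timeout quorum, not $p_l$'s, so one more intersection step is needed there.
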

\begin{proof}
    We prove this by contradiction. Assume that level two elected-fallback block $B$ is committed with rank $(v,r)$ and block $\tilde{B}$ with rank ($\tilde{v}$, $\tilde{r}$) such that ($\tilde{v}$, $\tilde{r}$) $>$ $(v,r)$ is the first block in the chain starting from $B$ that does not extend $B$. $\tilde{B}$ can be formed in two occurrences: (1) $\tilde{B}$ is a synchronous block in the view $v+1$ or (2) $\tilde{B}$ is a level $1$ fallback block with a view strictly greater than $v$. (we do not consider the case where $\tilde{B}$ is a level 2 elected-fallback block, because this directly follows from \ref{th:1})

    If $B$  is committed, then from the algorithm construction it is clear that a majority of the replicas will set $B$ as $block_{high}$. This is because, to send a $<$asynchronous-complete$>$ message with $B$, a replica should collect at least $n-f$ $<$vote-async$>$ messages. Hence, its guaranteed that if $\tilde{B}$ is formed in view $v$+1 as a synchronous block, then it will observe $B$ as the $block_{high}$, thus we reach a contradiction. 
    
    In the second case, if $\tilde{B}$ is formed in a subsequent view, then it is guaranteed that the level $1$ block will extend $B$ by gathering from the $<$timeout$>$ messages $B$ as $block_{high}$ or a block extending $B$ as the $block_{high}$, hence we reach a contradiction.
\end{proof}

\begin{theorem}\label{th:7}
    There exists a single history of committed blocks.
\end{theorem}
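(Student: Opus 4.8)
The plan is to show that the collection of all committed blocks is totally ordered by the \emph{extends} relation: for any two committed blocks $B$ and $\tilde B$, one extends the other. Since committing a block commits its entire causal history, the set of committed blocks is closed under taking ancestors, so once it is totally ordered by ancestry it is a single chain rooted at genesis, i.e.\ a single history. I would first reduce the statement to the claim that \emph{at most one block is committed per round}. Granting that: if $B$ is committed at round $r$ and $\tilde B$ at round $r' \geq r$, then $\tilde B$ has a unique ancestor at round $r$ (it exists and is unique by Theorem~\ref{th:2}, which forces the round to decrease by exactly one at each parent step along a chain), this ancestor lies in $\tilde B$'s causal history and is therefore committed, hence equals $B$; so $\tilde B$ extends $B$.

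I would prove per-round uniqueness by strong induction on the round $r$, the base case $r=0$ being the genesis block. For the step, let $B$ and $\tilde B$ both be committed at round $r$. By Theorem~\ref{th:2} their parents have round $r-1$, and, lying in the causal histories of committed blocks, are themselves committed, so by the induction hypothesis they coincide; call this common parent $C_{r-1}$. Every committed block is of one of the three committable types of Theorem~\ref{th:1} (a synchronous block that gathered $n-f$ votes, an elected-asynchronous block, or the level-$1$ parent of an elected-asynchronous block). If $B.v=\tilde B.v$, then $B$ and $\tilde B$ have the same rank, and Theorem~\ref{th:1} gives $B=\tilde B$. The remaining case is $B.v\neq\tilde B.v$, say $B.v<\tilde B.v$, and here I split on the type of $B$: if $B$ is a level-$2$ elected-asynchronous block, Theorem~\ref{th:6} says every block of strictly larger rank extends $B$ --- in particular $\tilde B$, whose view is larger --- contradicting $\tilde B.r=B.r$; if $B$ is synchronous, Theorems~\ref{th:3} and~\ref{th:4} dispose of everything occurring in view $B.v$, including its terminating asynchronous phase, and I would then carry the invariant ``a majority of replicas hold as $block_{high}$ a block extending $B$'' forward across each successive view boundary, from view $B.v$ up to view $\tilde B.v$, concluding that $\tilde B$ would have to extend $B$, again contradicting $\tilde B.r=r$.

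The main obstacle is this last cross-view propagation for a synchronously committed block: Theorems~\ref{th:3} and~\ref{th:4} are confined to a single view, so I expect to need an auxiliary induction over the views strictly between $B.v$ and $\tilde B.v$, showing at each step that whatever block seeds $block_{high}$ at the start of a new view --- be it via the new-view hand-off on the synchronous path, the elected-asynchronous block, or the $B_{fall}$ fallback of line~\ref{A:line:29} --- still extends $B$, because the $n-f$ messages it is derived from intersect the commit quorum of $B$ and thus carry a block of rank at least that of $B$. Theorem~\ref{th:5} is used in passing to ensure at most one elected-asynchronous block is committed per view change, keeping the rank bookkeeping in the induction consistent. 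Once this invariant is in place the contradiction, and hence per-round uniqueness, follows immediately, and with it the single-history conclusion.
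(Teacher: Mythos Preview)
Your reduction to per-round uniqueness and the ensuing strong induction on rounds is sound, but it is substantially more elaborate than the paper's argument, which is a three-line contradiction: two distinct histories would contain committed blocks neither of which extends the other, and this is ruled out directly by Theorems~\ref{th:3},~\ref{th:4}, and~\ref{th:6}. The cross-view propagation you flag as the main obstacle is exactly what the paper obtains by chaining Theorem~\ref{th:4} (a synchronous commit in view $v$ is extended by the elected-asynchronous block of view~$v$) with Theorem~\ref{th:6} (a committed elected-asynchronous block is extended by all later-rank blocks); since the only way to advance the view is through the asynchronous phase, this chain reaches every future committed block without an explicit induction over intermediate views. What your route buys is making the quorum-intersection invariant (``a majority hold a $block_{high}$ extending $B$'') explicit at each view boundary, which is arguably more transparent about what happens when an asynchronous phase terminates without committing anything.

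One slip to fix: your assertion that every committed block is of one of the three types in Theorem~\ref{th:1} is not correct. A synchronous block can be committed purely as an ancestor without itself having gathered $n-f$ votes --- e.g., the leader's last proposal before a timeout may sit as some replica's $block_{high}$ and enter the committed chain via the asynchronous phase. This does not break your same-view case, since Lemma~\ref{lem:1} applies to \emph{any} two synchronous blocks of equal rank irrespective of votes; but Lemmas~\ref{lem:4}--\ref{lem:6} genuinely need the vote or election hypotheses, so you cannot invoke Theorem~\ref{th:1} wholesale for such ancestors and must lean on your quorum invariant there too. You also omit the case where the lower-view block $B$ is a level-$1$ parent of an elected-asynchronous block; this is easily dispatched by applying Theorem~\ref{th:6} to its level-$2$ child at round $r{+}1$, which $\tilde B$ (having strictly larger view, hence strictly larger rank) would then have to extend, contradicting $\tilde B.r=r$.
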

\begin{proof}
    Assume by way of contradiction there are two different histories $H_1$ and $H_2$ of committed blocks. Then there is at least one block from $H_1$ that does not extend at least one block from $H_2$. This is a contradiction with theorems~\ref{th:3},~\ref{th:4} and~\ref{th:6}. Hence there exists a single chain of committed blocks.
\end{proof}

\begin{theorem}\label{th:8}
    For each committed replicated log position $r$, all replicas contain the same block.
\end{theorem}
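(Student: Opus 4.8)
The plan is to obtain Theorem~\ref{th:8} as a short corollary of Theorem~\ref{th:7} (there is a single history of committed blocks) together with Theorem~\ref{th:2} (adjacent blocks have consecutive round numbers). First I would fix the meaning of ``replica $p_i$ contains a block at committed log position $r$'': each replica stores a single pointer $block_{commit}$, and a block is considered committed at $p_i$ exactly when it lies on the parent chain from the genesis block up to $p_i$'s $block_{commit}$; thus position $r$ is filled at $p_i$ iff $p_i$'s $block_{commit}$ has round at least $r$, and the block occupying that position is the ancestor of $block_{commit}$ whose round equals $r$. If a replica has not committed position $r$ at all, the statement is vacuous for it, so it suffices to take two replicas $p_i$ and $p_j$ that have each committed position $r$ and show they agree there.

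The first step is to check that ``the block at position $r$'' is well defined. By Theorem~\ref{th:2}, along any parent chain the round numbers of adjacent blocks differ by exactly one (and views are non-decreasing); hence the chain from the genesis block up to $p_i$'s $block_{commit}$ contains exactly one block of each round $0,1,\dots$ up to $p_i.block_{commit}.r$. Let $B_i$ and $B_j$ denote the round-$r$ blocks on these chains for $p_i$ and $p_j$ respectively. Both $B_i$ and $B_j$ are committed blocks, since each lies on the committed chain of its replica, and by construction each has round number $r$.

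The second step is to apply Theorem~\ref{th:7}: all committed blocks lie on a single history, i.e. a single parent chain $H$ rooted at the genesis block. (This is a chain rather than merely a conflict-free set because, given two committed blocks with rounds $r'\le r''$, Theorems~\ref{th:3}, \ref{th:4} and~\ref{th:6} force the higher-round one to extend the lower-round one.) Therefore $B_i$ and $B_j$ both appear on $H$. Applying Theorem~\ref{th:2} once more to $H$, round numbers strictly increase along $H$, so $H$ contains at most one block with round $r$; hence $B_i = B_j$, which is exactly the claim.

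I expect the only real obstacle to be getting the definitions precisely right---connecting the scalar $block_{commit}$ to the set of committed log positions, and confirming uniqueness of the round-$r$ ancestor---since the substantive safety reasoning is already packaged in Theorems~\ref{th:3}--\ref{th:7}. A secondary point worth a sentence is the observation just noted, namely that Theorem~\ref{th:7} really does deliver a chain of committed blocks and not just a set in which no two blocks conflict.
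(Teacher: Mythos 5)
Your proposal is correct and takes essentially the same route as the paper: the paper's proof also reads the claim off the single committed chain (Theorem~\ref{th:7}, phrased there as ``the committed chain'') whose round numbers increase by one along adjacent blocks (Theorem~\ref{th:2}), so each log position carries exactly one block. The only cosmetic difference is that the paper additionally cites Theorem~\ref{th:1} to pin down uniqueness of the entry, which your explicit appeal to Theorem~\ref{th:7} already subsumes; your extra care in defining ``the block at position $r$'' via the round-$r$ ancestor of $block_{commit}$ is a welcome tightening, not a divergence.
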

\begin{proof}
    By theorem~\ref{th:2}, the committed chain will have incrementally increasing round numbers. Hence for each round number (log position), there is a single committed entry, and by theorem~\ref{th:1}, this entry is unique. This completes the proof.
\end{proof}

\subsection{Proof of termination}

\begin{theorem}\label{th:9}
    If at least $n-f$ replicas enter the fallback phase of view $v$ by setting $isAsync$ to true, then eventually they all exit the fallback phase and set $isAsync$ to false.
\end{theorem}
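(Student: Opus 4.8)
The plan is to treat the asynchronous phase of view $v$ as a fixed four-stage pipeline and show that every correct replica that is in it eventually traverses all four stages, the last of which sets $isAsync$ to false (line~\ref{A:line:34}). The four stages are: (i) on entry, broadcast a level-$1$ <propose-async> (line~\ref{A:line:7}); (ii) upon collecting $n-f$ <vote-async> for a level-$1$ block, broadcast a level-$2$ <propose-async> (line~\ref{A:line:19}); (iii) upon collecting $n-f$ <vote-async> for a level-$2$ block, broadcast <asynchronous-complete> (line~\ref{A:line:22}); (iv) upon collecting $n-f$ <asynchronous-complete> messages, flip the common coin and exit (lines~\ref{A:line:24}--\ref{A:line:34}). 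Note that a replica exits as soon as it has received $n-f$ <asynchronous-complete> messages --- it need not have completed stages (ii)--(iii) itself --- so it suffices to show that at least $n-f$ replicas broadcast <asynchronous-complete> for view $v$.

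First I would strengthen the hypothesis. Since $n-f \ge f+1 > f$, if $n-f$ replicas set $isAsync$ to true for view $v$ then more than $f$ replicas have abandoned the synchronous path, so fewer than $n-f$ replicas remain on it and it can no longer gather the quorum of $n-f$ votes needed to commit (line~\ref{S:line:9}); hence, as in the liveness intuition of \cref{sec:design-racs}, every correct replica eventually times out, broadcasts <timeout> (line~\ref{S:line:28}), and --- because <timeout> messages between correct replicas are eventually delivered over the perfect FIFO links of \cref{subsec:threat-model} --- collects $n-f$ <timeout> messages for view $v$ and enters the asynchronous phase (line~\ref{A:line:1}). Let $C$, with $|C| \ge n-f$, be the resulting set of correct replicas in the asynchronous phase of view $v$; all of $C$ will serve as the quorum at each stage.

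Next I would run the pipeline induction. Each $p \in C$ broadcasts its level-$1$ <propose-async> on entry, and by perfect links these reach every member of $C$; the replica(s) $p^\star \in C$ that entered with the largest round $r^\star$ receive a <vote-async> from every member of $C$, since their level-$1$ block has rank $(v, r^\star+1)$, which strictly exceeds the rank $(v, r_{cur})$ of every other $p \in C$ ($r_{cur}$ is not modified between entering the phase and exiting it), so the guard at line~\ref{A:line:9} passes. Hence $p^\star$ reaches stage (ii) with a level-$2$ block of rank $(v, r^\star+2)$, which by the same comparison is voted by all of $C$, so $p^\star$ reaches stage (iii). For a replica $p$ that entered with a smaller round, the catch-up rule --- building a level-$2$ block on a higher-ranked level-$1$ block of another replica (this is why the level-$2$ proposal at line~\ref{A:line:18} references an arbitrary level-$1$ parent $B$) --- lets $p$ obtain a level-$2$ proposal whose rank again exceeds every member's current rank, so $p$ too reaches stage (iii). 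Thus at least $n-f$ members of $C$ broadcast <asynchronous-complete>; by perfect links every member of $C$ eventually receives $n-f$ of them, the condition at line~\ref{A:line:24} fires, and $isAsync$ is set to false at line~\ref{A:line:34}.

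The main obstacle is exactly the claim in the previous paragraph that \emph{at least $n-f$} replicas --- not merely the highest-round one --- reach stage (iii), despite the monotone rank guards at lines~\ref{A:line:9} and~\ref{A:line:16}, which can make a lagging replica's own low-rank level-$1$ proposal unvotable by a majority. Pinning this down requires reasoning carefully about the rounds with which replicas enter: on entry a replica adopts $argmaxrank(\mathcal{S})$ over its $n-f$ <timeout> messages and sets $r_{cur}$ accordingly (line~\ref{A:line:1}), and one must show, using quorum intersection among these $n-f$-sized <timeout> quorums, that the resulting entry rounds cannot be spread out enough to leave fewer than $n-f$ replicas able to push a level-$1 \to$ level-$2$ chain through --- closing whatever residual gap remains via the catch-up rule. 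Everything else is the routine ``all messages among correct replicas are eventually delivered, so every quorum of size $n-f$ is eventually met'' induction over the four stages.
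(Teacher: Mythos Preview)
Your proposal is correct and follows essentially the same approach as the paper: first strengthen the hypothesis by arguing that all correct replicas (not just $n-f$) eventually enter the asynchronous phase because the synchronous path can no longer gather a quorum, then walk the four stages using perfect links to guarantee delivery, invoking the catch-up rule so that replicas whose own level-$1$ proposals are too low-ranked can still produce a votable level-$2$ block. The paper's proof is, if anything, less careful than yours about the point you flag as the ``main obstacle'' --- it simply asserts that the catch-up mechanism lets every lagging replica adopt a level-$1$ block that has collected $n-f$ \texttt{<vote-async>} messages and proceed --- so your identification of that step as the crux is accurate, and your level of rigor matches the paper's.
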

\begin{proof}
    If $n-f$ replicas enter the fallback path, then eventually all replicas (except for failed replicas) will enter the fallback path as there are less than $n-f$ replicas left on the synchronous path due to quorum intersection, so no progress can be made on the synchronous path and all replicas will timeout. As a result, at least $n-f$ correct replicas will broadcast their $<$timeout$>$ message and all replicas will enter the fallback path.
    
    Upon entering the fallback path, each replica creates a fallback block with level $1$ and broadcasts it. Since we use perfect point-to-point links, eventually all the level $1$ blocks sent by the $n-f$ correct replicas will be received by each replica in the fallback path. At least $n-f$ correct replicas will send them $<$vote-async$>$ messages if the rank of the level $1$ block is greater than the rank of the replica. To ensure liveness for the replicas that have a lower rank, the algorithm allows catching up, so that nodes will adopt whichever level $1$ block which received $n-f$ $<$vote-async$>$ arrives first. Upon receiving the first level $1$ block with $n-f$ $<$vote-async$>$ messages, each replica will send a level $2$ fallback block, which will be eventually received by all the replicas in the fallback path. Since the level $2$ block proposed by any block passes the rank test for receiving a $<$vote-async$>$, eventually at least $n-f$ level $2$ blocks get $n-f$ $<$vote-async$>$. Hence, eventually at least $n-f$ replicas send the $<$asynchronous-complete$>$ message, and exit the fallback path.
\end{proof}

\begin{theorem}\label{th:10}
    With probability $p>\frac{1}{2}$, at least one replica commits an elected-fallback block after exiting the fallback path.
\end{theorem}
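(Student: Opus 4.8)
The plan is to fix a single correct replica and prove that \emph{it} commits an elected-asynchronous block with probability strictly greater than $\frac12$; since this event is contained in ``at least one replica commits,'' the theorem follows immediately.

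First I would invoke Theorem~\ref{th:9}: once $n-f$ replicas enter the asynchronous phase of view $v$, every correct replica eventually reaches the handler at line~\ref{A:line:24}, where it holds a set $\mathcal{S}$ of the first $n-f$ $<$asynchronous-complete$>$ messages received and evaluates common-coin-flip($v$). Fix one such correct replica $p_i$ and let $\mathcal{S}_i$ be the set of proposers named in its $\mathcal{S}$. Since the $n-f$ $<$vote-async$>$ messages counted at line~\ref{A:line:15} for a level~$2$ block are addressed to that block's own proposer, each replica broadcasts $<$asynchronous-complete$>$ only for the unique level~$2$ block it itself proposed; combined with the no-equivocation assumption, the $n-f$ messages in $\mathcal{S}$ therefore come from $n-f$ \emph{distinct} proposers, so $|\mathcal{S}_i| = n-f$. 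Because $n \ge 2f+1$ we get $n-f \ge f+1 > n/2$, hence $|\mathcal{S}_i|/n > 1/2$.

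Next I would supply the probabilistic core. The only source of randomness is common-coin-flip($v$), which each replica derives from a pseudo-random generator seeded by a value shared secretly at bootstrap (Section~\ref{subsec:terminology}); crucially, the coin for view $v$ is pre-generated and influences no message before line~\ref{A:line:25}. The set $\mathcal{S}_i$, by contrast, is determined entirely by the adversary's delivery schedule during the asynchronous phase. By the content-oblivious adversary assumption (Section~\ref{subsec:threat-model}) the adversary can neither read message content nor inspect replica state, so whatever $(n-f)$-subset it causes $\mathcal{S}_i$ to equal is chosen independently of common-coin-flip($v$); therefore $\Pr[\text{common-coin-flip}(v) \in \mathcal{S}_i] = |\mathcal{S}_i|/n > 1/2$. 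When this event occurs, the elected proposer's level~$2$ block lies in $\mathcal{S}$, the branch at lines~\ref{A:line:26}--\ref{A:line:27} fires, and $p_i$ sets it as $block_{commit}$; by the definition of an elected-asynchronous block this is exactly such a block. Hence $p_i$ commits an elected-asynchronous block with probability at least $(n-f)/n > 1/2$, and a fortiori at least one replica does.

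The hard part will be making the independence step fully rigorous: one must argue that the adversary's scheduling decisions — which fix $\mathcal{S}_i$ — carry no information correlated with the coin, which rests on (i) the coin being a deterministic function of a secret seed that never appears on the wire, and (ii) the adversary being content-oblivious and blind to internal state, so it cannot infer the coin from the traffic and timing it does observe. A secondary point worth stating explicitly is that the $n-f$ $<$asynchronous-complete$>$ messages collected by $p_i$ really are from distinct proposers even in the presence of the level-$1$ catch-up mechanism of line~\ref{A:line:18}, since a replica only ever emits $<$asynchronous-complete$>$ for the single level~$2$ block it itself proposed.
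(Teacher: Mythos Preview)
Your proposal is correct and follows the same core argument as the paper: the coin lands on one of the $n-f$ proposers whose \texttt{asynchronous-complete} message arrived first with probability $(n-f)/n>1/2$. The paper's own proof is a terse two-sentence version of this that leaves implicit the two points you carefully justify---that the $n-f$ messages come from \emph{distinct} proposers and that the content-oblivious adversary cannot correlate its schedule with the coin---so your write-up is essentially the same approach done with more rigor.
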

\begin{proof}
    Let leader $L_{elected}$ be the output of the common-coin-flip($v$). A replica commits a block during the fallback mode if the $<$asynchronous-complete$>$ message from $L_{elected}$ is among the first $n-f$ $<$asynchronous-complete$>$ messages received during the fallback mode, which happens with probability at least greater than $\frac{1}{2}$. Hence with probability no less than $\frac{1}{2}$, each replica commits a chain in a given fallback phase.
\end{proof}

\begin{theorem}\label{th:11}
    A majority of replicas keep committing new blocks with high probability.
\end{theorem}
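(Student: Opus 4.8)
The plan is to compose the two liveness facts already proved --- Theorem~\ref{th:9} (every asynchronous phase terminates) and Theorem~\ref{th:10} (each asynchronous phase commits an elected-asynchronous block with probability $p>\frac{1}{2}$) --- with the obvious progress of the synchronous path, and then to show that after every commit the protocol returns to a configuration from which the same reasoning applies, so that commits recur forever. Two invariants would be maintained throughout: (i) whenever a block $B$ is committed, whether synchronously or as an elected-asynchronous block, at least $n-f$ replicas have adopted $B$ or a descendant of $B$ as their $block_{high}$ (this is precisely the property the safety lemmas exploit); and (ii) no correct replica is permanently stuck, since a lagging replica is pulled forward either by a later <propose> carrying a higher $block_{commit}$ (lines~\ref{S:line:23}--\ref{S:line:24}) or by the asynchronous catch-up rule that lets it build a level-$2$ block on whichever level-$1$ block first collects $n-f$ <vote-async> (lines~\ref{A:line:15}--\ref{A:line:19}).

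\textbf{Synchronous progress.} First I would treat views in which the network is synchronous for long enough and the designated leader $L_v$ is correct. Each broadcast <propose> (line~\ref{S:line:18}) is delivered within $\Delta$; every correct replica passes the rank test, cancels its timer (line~\ref{S:line:34}), and returns a <vote> (line~\ref{S:line:25}); hence $L_v$ collects $n-f$ votes and commits once per round trip (line~\ref{S:line:12}), and by invariant~(i) together with the $block_{commit}$ field of the next proposal each correct replica commits $B$ with its causal history within one further round. Thus in such views a majority keeps committing.

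\textbf{Asynchronous progress.} If the synchronous path stalls --- that is, fewer than $n-f$ replicas can still satisfy the commit rule of line~\ref{S:line:9} --- then the timeout mechanism (lines~\ref{S:line:27}--\ref{S:line:28}) drives at least $n-f$ correct replicas into the asynchronous phase, which by Theorem~\ref{th:9} terminates with all of them having broadcast <asynchronous-complete>. By Theorem~\ref{th:10} the coin of that view lands on one of the first $n-f$ such senders with probability $p>\frac{1}{2}$, in which case its level-$2$ block and causal history are committed and, by quorum intersection, held by a majority that then extends it (Theorem~\ref{th:6}). Since coins of distinct views are independent (property~(2) of the common coin, \cref{subsec:terminology}), the probability that $k$ consecutive asynchronous views all fail is at most $\left(\frac{1}{2}\right)^{k}$; so a new elected-asynchronous block is committed within two views in expectation and the event ``no further commit ever occurs'' has probability $0$ --- which is the precise reading of ``with high probability'' here.

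\textbf{Composition and the hard part.} After any commit the protocol moves to a fresh synchronous view --- on the synchronous path by advancing the round, and on the asynchronous path by incrementing the view, clearing $isAsync$, and uni-casting <new-view> to the next leader (lines~\ref{A:line:33}--\ref{A:line:35}) --- so invariants~(i) and~(ii) are re-established and the two progress arguments apply again; by induction a majority of replicas commit an unbounded sequence of new blocks. I expect the main obstacle to be the bookkeeping for an adversary that oscillates between the two modes arbitrarily often: one must show such oscillation cannot starve a correct replica forever (invariant~(ii)) and that the geometric bound of Theorem~\ref{th:10} still composes across an infinite interleaving of synchronous rounds and asynchronous phases, not merely across a run of consecutive asynchronous views.
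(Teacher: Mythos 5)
Your overall route is the same as the paper's: compose Theorem~\ref{th:9} and Theorem~\ref{th:10} for the fallback path with progress of the synchronous path, and iterate. The one piece you explicitly leave open --- how the per-phase $\tfrac{1}{2}$ bound composes when the adversary interleaves the two modes arbitrarily --- is exactly what the paper's proof disposes of with a simple structural device that your sketch does not name: induction on the view number together with a per-view dichotomy. Within any view $v$, either at least $n-f$ replicas never set $isAsync$ to true, in which case the synchronous path of that view produces an unbounded sequence of blocks with consecutive round numbers (Theorem~\ref{th:2}) and the claim holds outright, or at least $n-f$ replicas enter the (single) asynchronous phase of view $v$, which terminates by Theorem~\ref{th:9}, commits with probability at least $\tfrac{1}{2}$ by Theorem~\ref{th:10}, and strictly increments the view (line~\ref{A:line:33}). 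So an execution is just a sequence of views, each containing at most one asynchronous phase; ``oscillation'' cannot occur within a view, and your geometric bound via coin independence then applies verbatim across successive views. Note also that the paper's synchronous case needs no appeal to network synchrony or leader correctness as in your first case: the dichotomy ``either infinitely many blocks are committed in this view, or $n-f$ replicas eventually time out and enter the asynchronous phase'' covers all behaviours, which is why your ``stall'' case must be (and, suitably read, is) the exact complement of your progress case. With that induction made explicit, your argument closes the gap you flagged and coincides with the paper's proof.
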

\begin{proof}
    We first prove this theorem for the basic case where all replicas start the protocol with $v = 0$. If at least $n-f$ replicas eventually enter the fallback path, by theorem~\ref{th:9}, they eventually all exit the fallback path, and a new block is committed by at least one replica with probability no less than $\frac{1}{2}$. According to the asynchronous-complete step, all nodes who enter the fallback path enter view $v = 1$ after exiting the fallback path. If at least $n-f$ replicas never set $isAsync$ to true, this implies that the sequence of blocks produced in view $1$ is infinite. By Theorem~\ref{th:2}, the blocks have consecutive round numbers, and thus a majority replicas keep committing new blocks.

    Now assume the theorem~\ref{th:11} is true for view $v = 0,..., k-1$. Consider the case where at least $n-f$ replicas enter the view $v = k$. By the same argument for the $v = 0$ base case, $n-f$ replicas either all enter the fallback path commits a new block with $\frac{1}{2}$ probability, or keeps committing new blocks in view $k$. Therefore, by induction, a majority replicas keep committing new blocks with high probability.
\end{proof}

\begin{theorem}\label{th:12}
    Each client command is eventually committed.
\end{theorem}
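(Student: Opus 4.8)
The plan is to derive Theorem~\ref{th:12} from the ``the committed log keeps growing'' guarantee of Theorem~\ref{th:11} together with the standing assumption (\cref{subsec:consensus-intro}) that every client command is \emph{re-proposed} by replicas until it is committed. The heart of the argument is then to show that a command that is re-proposed forever cannot be starved, i.e.\ that it must eventually land in the batch (or causal history) of some committed block, no matter how the protocol oscillates between its synchronous and asynchronous modes.

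First I would record that, by Theorem~\ref{th:11} and Theorem~\ref{th:2}, with probability $1$ the committed chain grows without bound: committed blocks with strictly increasing round numbers keep appearing. Every committed block was, at proposal time, populated by its proposer --- the synchronous leader $L_v$ on line~\ref{S:line:36}, or an asynchronous proposer on lines~\ref{A:line:6} and~\ref{A:line:18} --- with a fresh batch returned by \codeword{getClientCommands()}. When \mempool is in use this batch is the \textit{lastCompletedRounds} array, and by the Availability and Causality properties of \mempool, committing a \sys block that references such an array delivers every \mempool-batch that was successfully replicated, in particular any \mempool-batch carrying the command $c$ under consideration; in the pipelined configuration the block directly carries $c$. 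Thus ``$c$ committed'' reduces to ``$c$ lies in the batch or causal history of some committed block''.

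Next I would fix an arbitrary still-uncommitted command $c$ and split on whether the execution eventually remains in the synchronous mode or re-enters the asynchronous mode for infinitely many views. In the first case, from some view $v$ onward the protocol never leaves the synchronous mode, so fewer than $f$ replicas stall, hence $L_v$ is correct and the network is eventually timely; by the argument inside Theorem~\ref{th:11}, $L_v$ commits an infinite sequence of blocks in view $v$, and since $c$ keeps being re-proposed to $L_v$ (through \mempool, or directly), $c$ eventually appears in one of these committed batches. In the second case there are infinitely many distinct views in which the asynchronous mode is entered; by Theorem~\ref{th:9} each such phase terminates, and by Theorem~\ref{th:10} in each phase the elected-asynchronous block is committed with probability $>\tfrac12$. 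Because \codeword{common-coin-flip} is independent across view numbers (property~(2) in \cref{subsec:terminology}) and at least $n-f$ correct replicas propose $c$ in their level-$2$ block from some view on, each asynchronous phase commits a block carrying $c$ with probability bounded below by a positive constant ($\approx\tfrac12\cdot\tfrac{n-f}{n}$), independently of the past; a geometric-tail argument then gives that $c$ is committed after finitely many phases with probability $1$.

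The step I expect to be the main obstacle is this second case: one must argue not merely that $c$ is proposed, but that $c$ sits inside a block that has a constant, strictly positive, past-independent probability of being the committed one in a given asynchronous phase, so that the per-phase success events can be chained into an almost-sure commit. This is exactly where property~(2) of \codeword{common-coin-flip} and the re-proposal assumption are both needed. A secondary, more routine obstacle is the bookkeeping linking ``$c$ committed by reference via a \mempool-batch identifier inside a \sys block'' to ``$c$ actually delivered to and executed by the application'', which is discharged by the Availability and Causality guarantees of \mempool.
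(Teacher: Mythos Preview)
Your proposal is correct and rests on exactly the same two ingredients as the paper's proof---Theorem~\ref{th:11} together with the standing re-proposal assumption from \cref{subsec:consensus-intro}---so the high-level approach matches. The paper's own proof is a two-sentence sketch that simply invokes these two facts and stops; your case split on synchronous-forever versus infinitely-many-asynchronous-phases, the geometric-tail argument using property~(2) of \codeword{common-coin-flip}, and the \mempool bookkeeping are all additional rigor you supply that the paper omits.
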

\begin{proof}
    If each replica repeatedly keeps proposing the client commands until they become committed, then eventually each client command
    gets committed according to theorem~\ref{th:11}.
\end{proof}

\end{document}